\documentclass[a4paper,11pt]{article}

\usepackage[utf8]{inputenc}
\usepackage[margin=3cm]{geometry}
\usepackage{xparse}
\usepackage{amsmath,amssymb,amsthm}
\usepackage{mathtools}
\usepackage{bbm}
\usepackage{fourier}
\usepackage{xcolor}
\usepackage[numbers]{natbib}
\usepackage{todonotes}
\usepackage{enumitem}
\usepackage[pdfpagelabels]{hyperref}
\hypersetup{colorlinks,citecolor=black,linkcolor=black,urlcolor=black}

\newtheorem{theorem}{Theorem}
\newtheorem{corollary}[theorem]{Corollary}
\newtheorem{lemma}[theorem]{Lemma}
\newtheorem{proposition}[theorem]{Proposition}

\DeclareDocumentCommand\setdef{mo}{\left\{#1\IfNoValueTF{#2}{}{ : #2}\right\}}
\DeclareDocumentCommand\scalprod{mm}{\left<#1,#2\right>}
\DeclareDocumentCommand\R{}{\mathbb{R}}

\DeclareDocumentCommand\orderO{m}{O\left(#1\right)}
\DeclareDocumentCommand\orderOmega{m}{\Omega\left(#1\right)}

\DeclareDocumentCommand\cplxNP{}{\mathsf{NP}}
\DeclareDocumentCommand\zerovec{o}{\IfNoValueTF{#1}{\mathbb{O}}{\mathbb{O}_{#1}}}

\DeclareDocumentCommand\cone{o}{\operatorname{cone}\IfValueTF{#1}{\left(#1\right)}{}}
\DeclareDocumentCommand\conv{o}{\operatorname{conv}\IfValueTF{#1}{\left(#1\right)}{}}
\DeclareDocumentCommand\xc{o}{\operatorname{xc}\IfValueTF{#1}{\left(#1\right)}{}}
\DeclareDocumentCommand\unitvec{m}{\mathbbm{e}_{#1}}
\DeclareDocumentCommand\dominant{m}{\ensuremath{#1^{\uparrow}}}
\DeclareDocumentCommand\proj{oo}{\IfValueTF{#1}{\operatorname{proj}{}_{#1}}{\operatorname{proj}{}}\IfValueTF{#2}{\left(#2\right)}{}}
\DeclareDocumentCommand\PolyPerfMatch{m}{P_{\text{pmatch}}(#1)}

\title{Extended Formulations for Radial Cones}
\author{Matthias Walter\footnote{RWTH Aachen University, Germany; \url{walter@or.rwth-aachen.de}} \and Stefan Weltge\footnote{Technical University of Munich, Germany; \url{weltge@tum.de}}}

\ProvideDocumentCommand\keywords{m}{\small \textbf{Keywords --- } #1}

\begin{document}

\maketitle

\begin{abstract}
  This paper studies extended formulations for radial cones at vertices of polyhedra, where the radial cone of a polyhedron $ P $ at a vertex $ v \in P $ is the polyhedron defined by the constraints of $ P $ that are active at $ v $.
  Given an extended formulation for $ P $, it is easy to obtain an extended formulation of comparable size for each its radial cones.
  On the contrary, it is possible that radial cones of $ P $ admit much smaller extended formulations than $ P $ itself.

  A prominent example of this type is the perfect-matching polytope, which cannot be described by subexponential-size extended formulations (Rothvo\ss{} 2014).
  However, Ventura \& Eisenbrand~(2003) showed that its radial cones can be described by polynomial-size extended formulations.
  Moreover, they generalized their construction to $ V $-join polyhedra.
  In the same paper, the authors asked whether the same holds for the odd-cut polyhedron, the blocker of the $ V $-join polyhedron.

  We answer this question negatively.
  Precisely, we show that radial cones of odd-cut polyhedra cannot be described by subexponential-size extended formulations.
  To obtain our result, for a polyhedron $ P $ of blocking type, we establish a general relationship between its radial cones and certain faces of the blocker of $ P $.
\end{abstract}

\keywords{radial cones; extension complexity; matching polytope; odd-cut polyhedron}

\section{Introduction}
The concept of extended formulations is an important technique in discrete optimization that allows for replacing the inequality description of some linear program by another inequality description of preferably smaller size using auxiliary variables.
Geometrically, given a polyhedron $ P \subseteq \R^p $ one searches for a polyhedron $ Q \subseteq \R^q $ together with a linear map $ \pi : \R^q \to \R^p $ such that $ \pi(Q) = P $.
The pair $ (Q, \pi) $ is called a \emph{linear extension} of $ P $ whose \emph{size} is the number of facets of $ Q $.

There are several polyhedra associated to classic combinatorial optimization problems having a large number of facets but admitting linear extensions of small size (polynomial in their dimension).
Prominent examples are the spanning tree polytope~\cite{Wong80,Martin91}, the subtour elimination polytope~\cite{Wong80}, and the cut dominant~\cite[\S 4.2]{ConfortiCZ13}.
On the other hand, the seminal work of Fiorini et al.~\cite{FioriniMPTW12} has shown that such descriptions do not exist for many polytopes associated to hard problems, including the cut polytope or the travelling salesman polytope.
Surprisingly, the same is true even for the perfect-matching polytope, a very well-understood polytope over which linear functions can be optimized in polynomial time~\cite{Edmonds65a}.
In fact, Rothvoß~\cite{Rothvoss14} proved that every linear extension of the perfect-matching polytope $ \PolyPerfMatch{n} $ of the complete graph $ K_n = (V_n,E_n) $ on $ n $ nodes has size $ 2^{\orderOmega{n}} $.

Thus, in terms of sizes of linear extensions, the perfect matching polytope appears as complicated as certain polytopes associated to hard problems.
Ventura \& Eisenbrand~\cite{VenturaE03} showed that this situation changes if one aims for \emph{local} descriptions:
Given a vertex $ v $ of $ \PolyPerfMatch{n} $, they showed that the polyhedron defined by only those constraints of $ \PolyPerfMatch{n} $ that are active at $ v $, the \emph{radial cone} at $ v $, has a linear extension of size $ O(n^3) $.

Note that such formulations can be used to efficiently test whether a given vertex is optimal with respect to a given linear function.
For linear 0/1-optimization problems, efficient routines for such local checks are usually enough to obtain an efficient algorithm for the actual optimization problem, see~\cite{SchulzWZ95,Schulz09}.
Thus, the work in~\cite{VenturaE03} yields another proof that the weighted matching problem can be solved in polynomial time.
However, this also suggests that such descriptions do not exist for polytopes associated to hard problems, which separates matching from harder optimization problems.

Furthermore, Ventura \& Eisenbrand generalized their construction to the $ V_n $-join polyhedron of $ K_n $ (which contains $ \PolyPerfMatch{n} $ as a face), showing that its radial cones also admit linear extensions of size $ \orderO{n^3} $.
In the same paper, the authors asked whether the same holds for the odd-cut polyhedron, which is the blocker of the $ V_n $-join polyhedron and hence closely related.\footnote{Precise definitions of all relevant terms used in the introduction will be given later.}

\paragraph{Our results.}

\begin{enumerate}[leftmargin=0em]
  \item The main purpose of this work is to answer their question negatively by showing the following result.
  \begin{theorem}
    \label{TheoremMainIntro}
    There exists a constant $ c > 0 $ such that for every even $ n $, the radial cones of the odd-cut polyhedron of $ K_n $ cannot be described by linear extensions of size less than $ 2^{c n} $.
  \end{theorem}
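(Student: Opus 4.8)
The plan is to transport Rothvo\ss's lower bound on $\xc[\PolyPerfMatch{n}]$ to the radial cones of the odd-cut polyhedron through the blocking relationship. Write $O$ for the odd-cut polyhedron of $K_n$ and $J$ for the $V_n$-join polyhedron, so that $O$ and $J$ are blockers of one another. I would first recall how $\PolyPerfMatch{n}$ sits inside $J$: for even $n$ the cardinality-minimal $V_n$-joins are exactly the perfect matchings, so minimizing $\scalprod{\onevec}{y}$ over $J$ exposes the convex hull of the perfect matchings, which is $\PolyPerfMatch{n}$. The whole argument then consists of (i) a general relationship tying radial cones of a blocking-type polyhedron to faces of its blocker, and (ii) locating $\PolyPerfMatch{n}$ as a face of one such blocker-face of $J$.

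For step (i), fix a blocking-type polyhedron $P \subseteq \R^p_{\ge 0}$ with blocker $B(P)$, a vertex $v$ of $P$, and the face $F_v = \setdef{y \in B(P)}[\scalprod{v}{y} = 1]$ that $v$ exposes on $B(P)$. By the vertex--facet correspondence of blocking duality, the facets of $P$ active at $v$ are exactly the nonnegativities $x_i \ge 0$ with $v_i = 0$ together with the inequalities whose normals are the vertices of $F_v$; moreover the recession cone of $F_v$ is spanned by the $\unitvec{i}$ with $v_i = 0$, since $B(P)$ has recession cone $\R^p_{\ge 0}$. A short computation then identifies the tangent cone of $P$ at $v$ with the dual of $\cone[F_v]$, so that the radial cone $R_v(P)$ is a translate of $(\cone[F_v])^{\ast}$.

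The delicate point --- and the step I expect to be the main obstacle --- is to convert this polar identity into the inequality $\xc[R_v(P)] \ge \xc[F_v]$ in the correct direction, since extension complexity is not preserved by polarity in general. I would argue through slack matrices and Yannakakis' theorem: under the ray--facet correspondence between a pointed cone and its dual, the slack matrix of $(\cone[F_v])^{\ast}$ is the transpose of that of $\cone[F_v]$, and nonnegative rank is invariant under transposition, giving $\xc[R_v(P)] = \xc[{\cone[F_v]}] \ge \xc[F_v]$, the last inequality because $F_v$ is a section of $\cone[F_v]$. The care here lies in handling the unbounded recession directions and the inactive nonnegativity coordinates, and in making the passage between cones and polytopes precise.

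It then remains to carry out step (ii) with $P = O$, whose blocker is $J$. The vertices of $O$ correspond to the facets of $J$, i.e.\ to the odd cuts, so I would take the vertex $v = \onevec_{\delta(z)}$ of $O$ corresponding to the single-node cut $\delta(z)$ for a fixed node $z$ (an odd cut since $|\{z\}| = 1$). Then $F_v = \setdef{y \in J}[y(\delta(z)) = 1]$, and every perfect matching lies in $F_v$ because it meets $z$ in exactly one edge; minimizing $\scalprod{\onevec}{y}$ over $F_v$ therefore exposes $\PolyPerfMatch{n}$ as a face of $F_v$. Chaining the inequalities,
\[
  \xc[R_v(O)] \;\ge\; \xc[F_v] \;\ge\; \xc[\PolyPerfMatch{n}] \;=\; 2^{\orderOmega{n}}
\]
by Rothvo\ss's bound, and since the size of any linear extension is at least the extension complexity, this yields a constant $c > 0$ with the claimed lower bound $2^{cn}$.
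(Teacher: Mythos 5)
Your overall plan follows the paper's strategy: pass from the radial cone of the odd-cut polyhedron at a vertex $v$ to the face $F_v$ of its blocker (the $V_n$-join polyhedron), and then locate a perfect-matching polytope inside that face so that Rothvo\ss's bound applies. For the particular vertex you chose, the argument is sound. The decisive problem is quantification. Theorem~\ref{TheoremMainIntro} is the case $T = V_n$ of Theorem~\ref{TheoremMain}, which asserts the exponential lower bound at \emph{every} vertex of $\dominant{\PolyTCut{n}}$, and in $K_n$ every odd cut $\delta(S)$ with $|S|$ odd is inclusion-wise minimal and hence a vertex. Your proof treats only $v = \chi^{\delta(z)}$ for a single node $z$, and it genuinely breaks at all other vertices: the fact you rely on is that every perfect matching meets $\delta(z)$ in exactly one edge, which gives $\PolyPerfMatch{n} \subseteq F_v$ and lets $\onevec$ expose $\PolyPerfMatch{n}$ as a face of $F_v$. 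For $|S| \ge 3$, a perfect matching may meet $\delta(S)$ in $3, 5, \dots$ edges, so $\PolyPerfMatch{n} \not\subseteq F_v$, and minimizing $\scalprod{\onevec}{y}$ over $F_v$ exposes only the convex hull of perfect matchings crossing $\delta(S)$ exactly once --- a polytope for which you have established no lower bound. Supplying that bound is exactly the work done in the paper's proof of Theorem~\ref{TheoremMain}: it passes to the face of $F_v$ on which all cut edges except one edge $\{t_1,t_2\}$ vanish, shows this face is a Cartesian product of a $T_1$-join polyhedron, a single point, and a $T_2$-join polyhedron with $|T_1|+|T_2| = |T|-2$, and applies Rothvo\ss's bound to the larger factor. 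Without this (or an equivalent) decomposition, your argument proves only the existential statement that \emph{some} radial cone needs exponential size --- enough to answer Ventura--Eisenbrand's question, but weaker than the theorem as stated and proved.

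Two remarks on your step (i), which is correct in substance but heavier than necessary. The paper obtains $|\xc(K) - \xc(F_v)| \le 1$ (Theorem~\ref{TheoremRadialConeExtensionComplexity}) by applying Martin's Proposition~\ref{TheoremSeparation} to the two inequality descriptions in Lemma~\ref{TheoremStructure}, with no slack-matrix machinery at all. If you insist on going through polarity, you must replace $\cone[F_v]$ by its closure (it is not closed, since $F_v$ is unbounded here), check pointedness and full-dimensionality for the ray--facet correspondence (these hold because $v$ is a vertex and blocking polyhedra are full-dimensional), and note that the Yannakakis-type equivalence between nonnegative rank and extension complexity of a cone holds only up to an additive constant when extensions are allowed to be arbitrary polyhedra rather than cones. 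Accordingly, your claimed equality between $\xc[R_v(P)]$ and the extension complexity of $\cone[F_v]$ should be stated as ``differ by $\orderO{1}$'' --- which is all your argument uses, so this is a presentational rather than a substantive issue.
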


  \item To obtain our result, for a polyhedron $ P $ of blocking type, we establish a general relationship between its radial cones and certain faces of the blocker of $ P $.
  In the case of the odd-cut polyhedron, we show that its radial cones correspond to certain faces of the $ V_n $-join polyhedron that can be shown to require large linear extensions using Rothvoß' result.
  Analogously, it turns out that radial cones of the $ V_n $-join polyhedron correspond to certain faces of the odd-cut polyhedron, which can be easily described by linear extensions of size $ \orderO{n^3} $.
  This allows us to give an alternative proof of the result by Ventura \& Eisenbrand.

  \item We complement our results by observing that radial cones of polytopes associated to most classical hard optimization problems indeed do not admit polynomial-size extended formulations in general.
\end{enumerate}

\paragraph{Outline.}

The paper is structured as follows.
In Section~\ref{SectionOverview}, we will introduce the relevant concepts and derive straight-forward results on extension complexities of radial cones.
Using elementary properties of blocking polyhedra, we will derive a structural relationship between radial cones and certain faces of the blocker in Section~\ref{SectionBlocking}.
Using these insights, our main result is proved in Section~\ref{SectionMain}, where we also provide an alternative proof of the result by Ventura and Eisenbrand.
Finally, an upper bound that complements our main result is provided in the appendix.

\paragraph{Acknowledgements.}
We would like to thank Robert Weismantel for inviting the first author to ETH Zürich, where parts of this research were carried out.

\DeclareDocumentCommand\radialcone{mm}{K_{#1}(#2)}

\section{Overview}
\label{SectionOverview}
Recall that, for a polyhedron $ P $ and a point $ v \in P $, we are interested in describing the \emph{radial cone} $ \radialcone{P}{v} $, which is the polyhedron defined by all inequalities that are valid for $ P $ and satisfied with equality by $ v $.\footnote{Technically, by our definition, $ \radialcone{P}{v} $ is not necessarily a cone. In fact, standard definitions of the radial cone (or the \emph{cone of feasible directions}) differ from ours in a translation by the vector $ -v $, see, e.g.,~\cite[\S~2.2]{Ruszczynski06}. However, the one given here will be more convenient for us.}
Thus, given an inequality description of $ P $, the radial cone is simply defined by dropping some of the inequalities.
Note that a polyhedron arising from $ P $ by deleting an arbitrary subset of inequalities might require much larger linear extensions than $ P $ does.
However, radial cones arise in a very structured way, which allows us to carry over linear extensions for $ P $.
This might become clear by observing that
\[
  \radialcone{P}{v} = \cone(P - v) + v,
\]
where $ \cone(X) \coloneqq \setdef{ \lambda x }[ \lambda \ge 0, \, x \in X ] $.
Let us formalize the previous claim and other basic observations in the following proposition.
To this end, we make use of the (linear) \emph{extension complexity} $ \xc(P) $ of a polyhedron $ P $, which is defined as the smallest size of any linear extension of $ P $.
\begin{proposition}
  \label{TheoremExtensionBasics}
  Let $P \subseteq \R^n$ be a polyhedron and $v \in P$.
  \begin{enumerate}[label=(\roman*)]
  \item
    \label{TheoremExtensionBasicsRadialCone}
    $\xc(\radialcone{P}{v}) \leq \xc(P) + 1$.
  \item
    \label{TheoremExtensionBasicsFace}
    Every face $ F $ of $ P $ satisfies $\xc(F) \leq \xc(P)$.
  \item
    \label{TheoremExtensionBasicsRadialConeFace}
    Every face $ F $ of $ P $ with $ v \in F $ satisfies $\xc(\radialcone{F}{v}) \leq \xc(\radialcone{P}{v})$.
  \item
    \label{TheoremExtensionBasicsProjection}
    For every linear map $\pi : \R^p \to \R^d$, we have $\xc(\pi(P)) \leq \xc(P)$
    and $\xc(\radialcone{\pi(P)}{\pi(v)}) \leq \xc(\radialcone{P}{v})$.
  \end{enumerate}
\end{proposition}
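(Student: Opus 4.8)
The representation $\radialcone{P}{v} = \cone[P-v] + v$ recorded above will be the workhorse: I would prove \ref{TheoremExtensionBasicsRadialCone} directly from it by homogenization, and then bootstrap the remaining three items. For \ref{TheoremExtensionBasicsRadialCone}, start from a minimum-size linear extension $(Q,\sigma)$ of $P$, written as $Q = \setdef{y}[Ay \le b]$ with $A$ having exactly $\xc[P]$ rows. Introduce the homogenization $\hat Q \coloneqq \setdef{(y,\lambda)}[Ay \le \lambda b,\ \lambda \ge 0]$, a polyhedron with at most $\xc[P]+1$ facets, together with the linear map $\tau(y,\lambda) \coloneqq \sigma(y) + (1-\lambda)v$. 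I would then verify $\tau(\hat Q) = \radialcone{P}{v}$ by two inclusions: for $(y,\lambda) \in \hat Q$ with $\lambda > 0$ one has $y/\lambda \in Q$ and thus $\tau(y,\lambda) = \lambda\bigl(\sigma(y/\lambda) - v\bigr) + v \in \cone[P-v]+v$, while the case $\lambda = 0$ contributes recession directions and lands in $\cone[P-v]+v$ as well, using that this set is closed (being the polyhedron $\radialcone{P}{v}$); conversely, any $\lambda(x-v)+v$ with $x \in P$, $\lambda \ge 0$ is the image of $(\lambda y,\lambda) \in \hat Q$ for a preimage $y \in Q$ of $x$. This yields $\xc[\radialcone{P}{v}] \le \xc[P]+1$.

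For \ref{TheoremExtensionBasicsFace} I would use the standard argument that faces inherit small extensions: given a minimum extension $(Q,\sigma)$ of $P$ and a face $F = \setdef{x \in P}[\transpose{c}x = \delta]$ induced by a valid inequality $\transpose{c}x \le \delta$, the set $Q' \coloneqq \setdef{y \in Q}[\transpose{c}\sigma(y) = \delta]$ is a face of $Q$ (as $y \mapsto \transpose{c}\sigma(y)$ is bounded by $\delta$ on $Q$) satisfying $\sigma(Q') = F$; since a face of $Q$ has at most as many facets as $Q$, we get $\xc[F] \le \xc[P]$. Item \ref{TheoremExtensionBasicsRadialConeFace} then reduces to \ref{TheoremExtensionBasicsFace}: writing $F = P \cap \setdef{x}[\transpose{c}x = \delta]$ with $\transpose{c}v = \delta$ (possible since $v \in F$), a short computation with the cone representation gives $\cone[F-v] = \cone[P-v] \cap \setdef{d}[\transpose{c}d = 0]$, the nontrivial inclusion being that $\transpose{c}\bigl(\lambda(x-v)\bigr) = 0$ with $\lambda > 0$ forces $x \in F$. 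Hence $\radialcone{F}{v} = \radialcone{P}{v} \cap \setdef{x}[\transpose{c}x = \delta]$, which, since $\transpose{c}x \le \delta$ is active at $v$ and therefore valid for $\radialcone{P}{v}$, is a face of $\radialcone{P}{v}$; applying \ref{TheoremExtensionBasicsFace} to $\radialcone{P}{v}$ gives $\xc[\radialcone{F}{v}] \le \xc[\radialcone{P}{v}]$.

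Finally, for \ref{TheoremExtensionBasicsProjection}: if $(Q,\sigma)$ is an extension of $P$, then $(Q,\pi \circ \sigma)$ is an extension of $\pi(P)$ using the same $Q$, so $\xc[\pi(P)] \le \xc[P]$. For the radial-cone inequality the key observation is that linear maps commute with the cone operation, $\pi(\cone[X]) = \cone[\pi(X)]$, which together with the representation gives
\[
  \radialcone{\pi(P)}{\pi(v)} = \cone[\pi(P)-\pi(v)] + \pi(v) = \pi\bigl(\cone[P-v]\bigr) + \pi(v) = \pi\bigl(\radialcone{P}{v}\bigr),
\]
so that $\xc[\radialcone{\pi(P)}{\pi(v)}] = \xc[\pi(\radialcone{P}{v})] \le \xc[\radialcone{P}{v}]$ by the first inequality just shown. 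I expect the only genuinely delicate point in the whole proposition to be the homogenization in \ref{TheoremExtensionBasicsRadialCone}, specifically the treatment of the recession directions at $\lambda = 0$; everything else is bookkeeping built on the representation $\radialcone{P}{v} = \cone[P-v]+v$.
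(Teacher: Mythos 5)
Your proof is correct and follows essentially the same route as the paper's: homogenization at the cost of one extra inequality for (i), pulling the face back to a minimum extension for (ii), identifying $\radialcone{F}{v}$ as a face of $\radialcone{P}{v}$ and invoking (ii) for (iii), and commuting $\pi$ with $\cone$ for (iv); your affine projection $\tau$ in (i) is harmless since extension complexity is invariant under translation. One remark: your version of (i) is in fact executed more carefully than the paper's, whose displayed identity $\radialcone{P}{v} = \setdef{ x \in \R^p }[ \exists y \in \R^d, \exists \mu \geq 0 : A(x-v) + By \leq c \mu ]$ is incorrect as literally written --- for $P = \setdef{x \in \R}[x \le 1]$ (no $y$-variables) and $v = 1$ it yields all of $\R$ rather than $\setdef{x \in \R}[x \le 1]$, because constraints tight at $v$ with $c_i \neq 0$ get spuriously relaxed by $c_i\mu$. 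The correct right-hand side is $(c - Av)\mu$ (translate $P$ by $-v$ before homogenizing), and your construction --- homogenizing $Q$ itself via $Ay \le \lambda b$, $\lambda \ge 0$ and recentering with $\sigma(y) + (1-\lambda)v$ --- sidesteps this slip automatically.
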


\begin{proof}
  To see~\ref{TheoremExtensionBasicsRadialCone},
  let $P = \setdef{ x \in \R^p }[ \exists y \in \R^d : Ax + By \leq c ]$, with matrices $A \in \R^{m \times n}$, $B \in \R^{m \times d}$ and $c \in \R^m$ be an extended formulation of $P$ with $m = \xc(P)$.
  It is easy to see that
$$\radialcone{P}{v} = \setdef{ x \in \R^p }[ \exists y \in \R^d, \exists \mu \geq 0 : A(x-v) + By \leq c \mu ],$$
  which concludes the proof of this part.

  Let $ F $ be a face of $ P $ and let $ H $ be a corresponding supporting hyperplane, i.e., $F = P \cap H$.
  Since $ H $ is described by an equation,~\ref{TheoremExtensionBasicsFace} follows.
  Moreover, $\radialcone{F}{v} = \radialcone{P}{v} \cap H$, i.e., the radial cone of $ F $ at $ v $ is a face of the radial cone of $ P $ at $ v $.
  Application of~\ref{TheoremExtensionBasicsFace} yields~\ref{TheoremExtensionBasicsRadialConeFace}.

  The first statement of~\ref{TheoremExtensionBasicsProjection} follows by concatenating the projection map of a minimum-size extension of $ P $ with $ \pi $.
  To prove the second statement, we will show that $\pi(\radialcone{P}{v}) = \radialcone{\pi(P)}{\pi(v)}$.
  By translating $P$ to $P - v$ (and by keeping $\pi$, also translating $\pi(P)$ to $\pi(P) - \pi(v)$), this is equivalent to showing $ \pi( \cone(P) ) = \cone(\pi(P))$ for $\zerovec \in P$.
  Clearly, the last statement holds by linearity of $\pi$.
\end{proof}

Notice that one can get rid of the ``$+1$'' in Proposition~\ref{TheoremExtensionBasics}~\ref{TheoremExtensionBasicsRadialCone} by projecting the radial cone of a minimum-size extension of $P$ at any preimage of $v$.
This makes the proof slightly longer, and we decided to present the simpler proof above.

\DeclareDocumentCommand\PolyCut{m}{P_{\text{CUT}}(#1)}%

On the one hand, Proposition~\ref{TheoremExtensionBasics}~\ref{TheoremExtensionBasicsRadialCone} shows that radial cones of polyhedra admitting small extensions, e.g., the ones mentioned in the introduction, also have a small extension complexities.
On the other hand, the last two statements of the proposition can be used to derive lower bounds on extension complexities of radial cones of polytopes related to many $\cplxNP$-hard problems.

\paragraph{Radial cones of polytopes associated to hard problems.}
Consider the cut polytope $ \PolyCut{n} \in \R^{E_n} $ of the complete graph $ K_n = (V_n,E_n) $ defined as the convex hull of characteristic vectors of cuts (in the edge space) in $ K_n $.
Braun et~al.\ proved (see Proposition~3 in~\cite{BraunFPS12}) that $ \cone(\PolyCut{n}) $ has extension complexity at least $ 2^{\orderOmega{n}} $.
Note that $ \cone(\PolyCut{n}) $ is the radial cone of $ \PolyCut{n} $ at the vertex corresponding to the empty cut. 
Furthermore, it has been shown that several polytopes associated to other $\cplxNP$-hard problems have faces that can be projected onto cut polytopes by (affine) linear maps.
Examples are certain stable-set polytopes and traveling-salesman polytopes~\cite{FioriniMPTW12}, certain knapsack polytopes~\cite{AvisT13,PokuttaV13} and 3d-matching polytopes (see~\cite{AvisT13}).
Consider any such a polytope $P(n)$ and let $F(n)$ be a face that projects to $\PolyCut{n}$.
Clearly, $F(n)$ must have a vertex $v_n$ whose projection is the vertex $\zerovec$ of $\PolyCut{n}$.
By Proposition~\ref{TheoremExtensionBasics}~\ref{TheoremExtensionBasicsRadialConeFace} and~\ref{TheoremExtensionBasicsProjection},
the extension complexity of the radial cone of $P(n)$ at $v_n$ is greater than or equal to the extension complexity of the radial cone of $\PolyCut{n}$ at $\zerovec$.
Hence, for such polytopes, we obtain super-polynomial lower bounds on extension complexities of some of their radial cones.

\DeclareDocumentCommand\PolyTJoin{om}{\ensuremath{P_{\IfValueTF{#1}{#1}{T}\text{-join}}(#2)}}%
\DeclareDocumentCommand\PolyTCut{om}{\ensuremath{P_{\IfValueTF{#1}{#1}{T}\text{-cut}}(#2)}}%

\paragraph{Polyhedra associated to matchings, $ T $-joins, and $ T $-cuts.}
Throughout the paper, let $ T \subseteq V_n $ be a node set of even cardinality.
A \emph{$ T $-join} is a subset $ J \subseteq E_n $ of edges such that a node $ v \in V_n $ has odd degree in the subgraph $ (V_n,J) $ if and only if $ v \in T $.
A \emph{$ T $-cut} is a subset $ C \subseteq E_n $ of edges such that $ C = \delta(S) := \setdef{\{v,w\} \in E_n}[v \in S, \, w \notin S] $ holds for some $ S \subseteq V_n $ for which $ |S \cap T| $ is odd.
The $ V_n $-cuts are also known as \emph{odd cuts}.
The perfect-matching polytope $ \PolyPerfMatch{n} $, $ T $-join-polytope $ \PolyTJoin{n} $ and $ T $-cut polytope $ \PolyTCut{n} $ are defined as the convex hulls of characteristic vectors of all perfect matchings, $ T $-joins and $ T $-cuts of $K_n$, respectively.
The (weighted) minimization problem for $ T $-cuts is $ \cplxNP $-hard for arbitrary objective functions, but can be solved in polynomial time for nonnegative ones~\cite{PadbergR82}.
For this reason we focus on the \emph{dominant} of the $ T $-cut polytope, defined as $ \dominant{ \PolyTCut{n} } := \PolyTCut{n} + \R_+^{E_n}$, which in turn is related to the dominant of the $ T $-join polytope $ \dominant{ \PolyTJoin{n} } := \PolyTJoin{n} + \R_+^{E_n}$.
We also refer to $ \dominant{ \PolyTCut{n} } $ and $ \dominant{ \PolyTJoin{n} } $ as the \emph{$ T $-cut polyhedron} and the \emph{$ T $-join polyhedron}, respectively.
The descriptions of both polyhedra in terms of linear inequalities are well-known~\cite{EdmondsJ73} (using $x(F)$ as a short-hand notation for $\sum_{e \in F} x_e$):
\begin{align}
  \label{eqDescriptionTJoin}
  \dominant{ \PolyTJoin{n} } &= \setdef{ x \in \R_+^{E_n} }[ x(C) \geq 1 \text{ for all $T$-cuts $C$ }  ] \\
  \nonumber
  \dominant{ \PolyTCut{n} } &= \setdef{ x \in \R_+^{E_n} }[ x(J) \geq 1 \text{ for all $T$-joins $J$ }  ]
\end{align}
It is worth noting that the vertices of $\dominant{ \PolyTJoin{n} }$ are the inclusion-wise minimal $ T $-joins,
i.e., those that do not contain cycles~\cite[\S 12.2]{KorteV12} and hence are edge-disjoint unions of $ \frac{1}{2}|T| $ paths whose endnodes are distinct and in $ T $.
The perfect-matching polytope $ \PolyPerfMatch{|T|} $ is a face of $ \dominant{\PolyTJoin{n}} $, induced by $ x(\delta(v)) \geq 1 $ for all $ v \in T $ and $ x(\delta(v)) \geq 0 $ for all $ v \in V_n \setminus T $.\footnote{We use the short-hand notation $ \delta(v) := \delta(\setdef{v}) $.}
Thus, from Rothvoß' proof for the exponential lower bound on the extension complexity of the perfect-matching polytope it follows that
\begin{equation}
  \label{eqIFOR1}
  \xc(\dominant{\PolyTJoin{n}}) \geq 2^{\orderOmega{|T|}}.
\end{equation}
It turns out that this bound is essentially tight.
In fact, in Appendix~\ref{SectionUpperBound} we give a linear extension for $ \dominant{\PolyTJoin{n}} $ showing
\begin{equation}
  \label{eqIFOR2}
  \xc(\dominant{\PolyTJoin{n}}) \le \orderO{ n^2 \cdot 2^{|T|} }.
\end{equation}
Thus, for case $ T = V_n $ with $ n $ even we obtain that the extension complexity of the $ V_n $-join polyhedron grows exponentially in $ n $.
In the next section we will see that this result carries over to the $ V_n $-cut polyhedron, also known as the \emph{odd-cut polyhedron}.

%
%

\section{Blocking pairs of polyhedra}
\label{SectionBlocking}

\DeclareDocumentCommand\blocker{m}{B(#1)}%

The $ T $-cut polyhedron and the $ T $-join polyhedron belong to the class of \emph{blocking polyhedra}.
A polyhedron $ P \subseteq \R^d_+ $ is \emph{blocking} if $ x' \geq x $ implies $ x' \in P $ for all $ x \in P $.
Such a polyhedron can be described as $ P = \setdef{ x \in \R_+^d }[ \scalprod{y^{(i)}}{x} \geq 1 \text{ for } i = 1,\dotsc,m ] $ for certain nonnegative vectors $y^{(1)},\dotsc,y^{(m)} \in \R_+^d$ or as $ P = \conv \setdef{x^{(1)},\dotsc,x^{(k)}} + \R_+^d $ for certain nonnegative vectors $x^{(1)}, \dotsc, x^{(k)} \in \R_+^d$.
The \emph{blocker} of $P$, defined via
\[
  \blocker{P} := \setdef{ y \in \R_+^d }[ \scalprod{x}{y} \geq 1 \text{ for all } x \in P ],
\]
is again a blocking polyhedron and satisfies $\blocker{\blocker{P}} = P$.
We refer to Section~9.2 in Schrijver's book~\cite{Schrijver86} for the proofs and more properties of blocking polyhedra.

In what follows, we will establish some connections between extension complexities of (certain faces of) blocking polyhedra and (certain faces of) their blockers.
We will make use of the following key observation of Martin~\cite{Martin91} that relates the extension complexities of certain polyhedra, in particular if they are in a blocking relation.
\begin{proposition}[{\cite{Martin91}, see also~\cite[Prop.~1]{ConfortiKWW15}}]
  \label{TheoremSeparation}
  Given a non-empty polyhedron $Q$ and $\gamma \in \R$, let
  \[
    P = \setdef{ x }[ \scalprod{y}{x} \geq \gamma \text{ for all } y \in Q ].
  \]
  Then $ \xc[P] \le \xc[Q] + 1 $.
\end{proposition}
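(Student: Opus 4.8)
The plan is to turn the (a priori infinite) system defining $P$ into a single parametric linear program and then dualize it. Observe first that $x \in P$ holds if and only if $\min_{y \in Q} \scalprod{y}{x} \ge \gamma$, where we read the minimum as $-\infty$ whenever it is unbounded below; it is never $+\infty$, since $Q \neq \emptyset$. The key idea is to evaluate this minimum through a minimum-size extended formulation of $Q$. So I would fix a linear extension $Q = \setdef{y}[\exists z : Ay + Bz \le b]$ whose inequalities $Ay + Bz \le b$ number $\xc[Q]$ (any implicit equations of the formulation I would keep aside, as they will only produce free dual multipliers and hence no facets). Since the objective ignores $z$, we have $\min_{y \in Q} \scalprod{y}{x} = \min \setdef{\scalprod{y}{x}}[Ay + Bz \le b]$.

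Next I would apply linear programming duality to this program, treating $x$ as a fixed cost vector and $(y,z)$ as the variables. Assigning a multiplier $\lambda \ge \zerovec$ to the rows of $Ay + Bz \le b$, the dual reads $\max \setdef{-\transpose{b}\lambda}[\transpose{A}\lambda = -x,\ \transpose{B}\lambda = \zerovec,\ \lambda \ge \zerovec]$. Because $Q$ is nonempty, the primal is feasible, so strong duality covers the case of a finite optimum, while the dual is infeasible exactly when the primal is unbounded below. Combining both cases, $x \in P$ if and only if there exists $\lambda \ge \zerovec$ with $\transpose{A}\lambda = -x$, $\transpose{B}\lambda = \zerovec$ and $-\transpose{b}\lambda \ge \gamma$: when the primal optimum is finite this just says the dual optimum is at least $\gamma$, and when it equals $-\infty$ both sides are false. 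This yields the extended formulation
\[
  P = \setdef{x}[ \exists \lambda \ge \zerovec : \transpose{A}\lambda = -x,\ \transpose{B}\lambda = \zerovec,\ \transpose{b}\lambda \le -\gamma ].
\]

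Finally I would count facets. The extension can be taken in $\lambda$-space with projection $\pi(\lambda) = -\transpose{A}\lambda$, and its only inequalities are the $\xc[Q]$ nonnegativity constraints $\lambda \ge \zerovec$ together with the single inequality $\transpose{b}\lambda \le -\gamma$; the remaining constraints $\transpose{B}\lambda = \zerovec$ are equations and therefore contribute no facet. Hence the size of this extension is at most $\xc[Q] + 1$, giving $\xc[P] \le \xc[Q] + 1$. I expect the main obstacle to be the careful bookkeeping of the degenerate cases of the parametric LP — ensuring that the unbounded-below case of the primal corresponds precisely to dual infeasibility, and thus to $x \notin P$ — rather than the duality computation itself; a secondary point worth stating explicitly is that equality constraints do not count toward the size, which is exactly what keeps the bound at $\xc[Q]+1$.
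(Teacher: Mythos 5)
Your proposal is correct. The paper itself states this proposition without proof, citing Martin~\cite{Martin91} and Conforti et al.~\cite[Prop.~1]{ConfortiKWW15}, and your argument — dualizing the parametric LP $\min\setdef{\scalprod{y}{x}}[y \in Q]$ over an extended formulation of $Q$, handling the unbounded/infeasible cases via $Q \neq \emptyset$, and noting that the equations $\transpose{B}\lambda = \zerovec$ contribute no facets so only the $\xc[Q]$ nonnegativity constraints plus the single inequality $\transpose{b}\lambda \le -\gamma$ count — is exactly the standard proof given in those references.
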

A first consequence of Proposition~\ref{TheoremSeparation} is that the extension complexities of a blocking polyhedron $P$ and its blocker $\blocker{P}$ differ by at most $d$ (due to the nonnegativity constraints).
Thus, the extension complexities of $ \dominant{ \PolyTCut{n} } $ and $ \dominant{ \PolyTJoin{n} } $ differ by at most $ \binom{n}{2} $.
In particular, in view of~\eqref{eqIFOR1} and~\eqref{eqIFOR2}, we obtain
\begin{gather}
  2^{\orderOmega{|T|}} \le \xc(\dominant{\PolyTCut{n}}) \le \orderO{ n^2 \cdot 2^{|T|} }. \label{ExtensionComplexityTCutPoly}
\end{gather}
\DeclareDocumentCommand\polarface{mm}{F_{#1}(#2)}%
The main purpose of this section, however, is to show that a radial cone of a blocking polyhedron can be analyzed by considering a certain face of the blocker.
To this end, let us now consider a general pair $ (P, \blocker{P}) $ of blocking polyhedra in $ \R^d_+ $.
For every point $ v \in P $ we define the set
\begin{equation}
  \label{EquationPolarFaceOuter}
  \polarface{\blocker{P}}{v} := \setdef{ y \in \blocker{P} }[ \scalprod{v}{y} = 1 ]
  = \setdef{ y \in \R_+^d }[ \scalprod{ v }{ y } = 1, \, \scalprod{x}{y} \geq 1 ~\forall x \in P  ], 
\end{equation}
which is a face of $ \blocker{P} $.
The following lemma establishes structural connections between $\radialcone{P}{v}$ and $\polarface{\blocker{P}}{v}$.
\begin{lemma}
  \label{TheoremStructure}
  Let $ P \subseteq \R^d_+ $ be a blocking polyhedron and let $ v \in P $.
  \begin{enumerate}[label=(\roman*)]
  \item
    \label{TheoremPolarFaceSeparation}
    $\polarface{\blocker{P}}{v} = \setdef{ y \in \R^d }[ \scalprod{ v }{ y } = 1, ~\scalprod{x}{y} \geq 1 ~\forall x \in \radialcone{P}{v} ] $.
  \item
    \label{TheoremRadialConeSeparation}
    $ \radialcone{P}{v} = \setdef{ x \in \R^d }[ \scalprod{y}{x} \geq 1 ~\forall y \in \polarface{\blocker{P}}{v} ]$.
  \end{enumerate}
\end{lemma}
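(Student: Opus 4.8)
The plan is to prove both parts from the representation $\radialcone{P}{v} = \cone[P-v] + v$, using two elementary facts about the blocking polyhedron $P$: that $P$ is upward closed, and that $P \subseteq \radialcone{P}{v}$ (since each $x \in P$ equals $1\cdot(x-v)+v$). A consequence I would record at the outset is that $\radialcone{P}{v}$ is itself a translate $v + \cone[P-v]$ of a convex cone and is again upward closed; both facts are short to verify but the whole argument rests on them.

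For part~\ref{TheoremPolarFaceSeparation} I would argue by double inclusion. Denote the right-hand side by $S := \setdef{y \in \R^d}[\scalprod{v}{y}=1,~\scalprod{x}{y}\ge 1~\forall x \in \radialcone{P}{v}]$. For $\polarface{\blocker{P}}{v} \subseteq S$, take $y \in \polarface{\blocker{P}}{v}$ and any $x = \mu(p-v)+v \in \radialcone{P}{v}$ with $\mu \ge 0$ and $p \in P$; expanding gives $\scalprod{x}{y} = 1 + \mu(\scalprod{p}{y}-1) \ge 1$, because $\scalprod{p}{y}\ge 1$. For the reverse inclusion, $P \subseteq \radialcone{P}{v}$ yields $\scalprod{p}{y}\ge 1$ for all $p \in P$, and upward-closedness of $P$ then forces $y \ge 0$, so $y \in \blocker{P}$ with $\scalprod{v}{y}=1$, i.e.\ $y \in \polarface{\blocker{P}}{v}$.

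For part~\ref{TheoremRadialConeSeparation}, the inclusion ``$\subseteq$'' is immediate from~\ref{TheoremPolarFaceSeparation}. The reverse inclusion, that $R := \setdef{x \in \R^d}[\scalprod{y}{x}\ge 1~\forall y \in \polarface{\blocker{P}}{v}] \subseteq \radialcone{P}{v}$, is the substantive direction, and I would prove it by separation. Given $x_0 \notin \radialcone{P}{v}$, separate $x_0$ from the polyhedron $\radialcone{P}{v}$. Two structural observations sharpen the separator: since $\radialcone{P}{v}$ is upward closed the normal $a$ may be taken with $a \ge 0$, and since $\radialcone{P}{v} = v + \cone[P-v]$ is a translated cone the inequality normalizes to $\scalprod{a}{x} \ge \beta$ with $\beta := \scalprod{a}{v} \ge 0$, giving $\scalprod{a}{p} \ge \beta$ for all $p \in P$ and $\scalprod{a}{x_0} < \beta$. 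If $\beta > 0$, then $y := a/\beta$ lies in $\polarface{\blocker{P}}{v}$ and satisfies $\scalprod{y}{x_0} < 1$, so $x_0 \notin R$. If $\beta = 0$, then $a$ is a recession direction of $\polarface{\blocker{P}}{v}$ (one checks $\bar y + t a \in \polarface{\blocker{P}}{v}$ for a base point $\bar y$ and all $t \ge 0$), and since $\scalprod{a}{x_0} < 0$, taking $t$ large yields $\scalprod{\bar y + t a}{x_0} < 1$, again placing $x_0 \notin R$.

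The main obstacle is precisely the homogeneous case $\beta = 0$: these separators encode the nonnegativity constraints $x_j \ge 0$ (for coordinates $j$ with $v_j = 0$) that $\radialcone{P}{v}$ retains, and they are not themselves of the normalized form $\scalprod{y}{x}\ge 1$. Resolving them requires tilting an existing base point of $\polarface{\blocker{P}}{v}$ along $a$, which is the one step that uses $\polarface{\blocker{P}}{v} \ne \emptyset$; this nonemptiness is equivalent to $v$ being a minimal point of $P$ and in particular holds for every vertex, which is the case of interest. The remaining verifications --- that $\radialcone{P}{v}$ is upward closed and that the separating inequality can be normalized through $v$ --- are routine but should be stated explicitly, since the case analysis depends on them.
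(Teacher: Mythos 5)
Your proof is correct (modulo one caveat that it shares with the paper, discussed below), and for part~\ref{TheoremRadialConeSeparation} it takes a genuinely different route. Part~\ref{TheoremPolarFaceSeparation} is essentially the paper's own argument: writing a point of $\radialcone{P}{v}$ as $v + \mu(p - v)$ with $p \in P$ gives ``$\subseteq$'', and upward-closedness of $P$ (the paper uses $v + \unitvec{j} \in P$) recovers $y \geq 0$ for ``$\supseteq$''. For part~\ref{TheoremRadialConeSeparation}, however, the paper fixes a finite description $P = \setdef{x \in \R_+^d}[\scalprod{y^{(i)}}{x} \geq 1, \ i = 1,\dotsc,m]$ and argues constraint-wise: for ``$\subseteq$'' it invokes the inner description $\blocker{P} = \conv\setdef{y^{(1)},\dotsc,y^{(m)}} + \R_+^d$ to dominate $\hat{y}$ by a point $\bar{y}$ supported on the tight constraints, and for ``$\supseteq$'' it verifies $\scalprod{y^{(i)}}{\hat{x}} \geq 1$ for $i \in I$ and $\hat{x}_j \geq 0$ for $j \in J$ separately. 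You instead obtain ``$\subseteq$'' of~\ref{TheoremRadialConeSeparation} immediately from~\ref{TheoremPolarFaceSeparation} --- a genuine shortcut the paper does not exploit --- and prove ``$\supseteq$'' by separation: every valid inequality for the upward-closed set $\radialcone{P}{v}$ automatically has nonnegative normal, and it can be re-anchored at $v$ because $\radialcone{P}{v} - v$ is a cone; then $\beta > 0$ rescales to a point of $\polarface{\blocker{P}}{v}$, while $\beta = 0$ gives a recession direction of that face. Your argument is description-free (it never needs the inner representation of the blocker, a nontrivial fact the paper imports from the theory of blocking polyhedra), at the cost of invoking the separation theorem and a case split; the paper's version is more elementary and makes the correspondence between tight constraints of $P$ at $v$ and elements of the face explicit.

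One remark on the obstacle you flag: the base point $\bar{y} \in \polarface{\blocker{P}}{v}$ needed in your $\beta = 0$ case is needed in exactly the same way in the paper's proof, which also picks ``an arbitrary $\bar{y} \in \polarface{\blocker{P}}{v}$'' to handle $j \in J$, and neither proof justifies $\polarface{\blocker{P}}{v} \neq \emptyset$. This is not pedantry: for $P = \setdef{x \in \R_+^2}[x_1 \geq 1]$ and $v = (2,0)$ one has $\polarface{\blocker{P}}{v} = \emptyset$, so the right-hand side in~\ref{TheoremRadialConeSeparation} is all of $\R^2$, whereas $\radialcone{P}{v} = \setdef{x \in \R^2}[x_2 \geq 0]$; the lemma as stated fails for such $v$. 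So flagging the nonemptiness is a point in your favor rather than a gap relative to the paper. Only your side remark that nonemptiness is \emph{equivalent} to minimality of $v$ is imprecise: minimality (in particular, $v$ being a vertex) implies nonemptiness whenever $\blocker{P} \neq \emptyset$, but nonemptiness can also occur at non-minimal points, e.g.\ at $v = (1,5)$ in the example above.
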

\begin{proof}
  We first prove ``$\subseteq$'' of part~\ref{TheoremPolarFaceSeparation}.
  To this end we will show that, for all $x \in \radialcone{P}{v} \setminus P$, inequality $\scalprod{ x }{ y } \geq 1$ is valid for $\polarface{\blocker{P}}{v}$.
  Let $x \in \radialcone{P}{v}$, i.e., there exist $x^{(1)}, \dotsc, x^{(k)} \in P$ and $\mu_1, \dotsc, \mu_k \geq 0$ with $x = v + \sum_{i=1}^k \mu_i (x^{(i)} - v)$.
  Then
  \begin{gather*}
    \scalprod{x}{y}
    = \scalprod{ v + \sum_{i=1}^k \mu_i (x^{(i)} - v) }{ y }
    = \scalprod{ v }{ y }  + \sum_{i=1}^k \mu_i \scalprod{ (x^{(i)} - v) }{ y }
    \geq 1
  \end{gather*}
  follows from $\scalprod{v}{y} = 1$, $\mu_i \geq 0$ and $\scalprod{ x^{(i)} }{y} \geq 1$ for all $i \in [k]$.

  To prove ``$\supseteq$'' of part~\ref{TheoremPolarFaceSeparation},
  we have to show for every $j \in [d]$ that the nonnegativity constraint $y_j \geq 0$ is redundant in the right-hand side of~\eqref{EquationPolarFaceOuter}.
  From $v + \unitvec{j} \in P$ we obtain the valid inequality $\scalprod{ v + \unitvec{j} }{ y } \geq 1$. Subtracting $\scalprod{ v }{ y } = 1$ implies the desired inequality $y_j \geq 0$.

  Before we turn to the proof of part~\ref{TheoremRadialConeSeparation}, let us fix some notation.
  Denote by
  $I := \setdef{ i \in [m] }[ \scalprod{ v }{ y^{(i)} } = 1 ]$
  and
  $J := \setdef{ j \in [d] }[ v_j = 0 ]$
  the index sets of the inequalities of $P$ that are tight at $v$.
  In other words, $\radialcone{P}{v} = \setdef{ x \in \R^d }[ \scalprod{ y^{(i)} }{ x } \geq 1 \text{ for all } i \in I \text{ and } x_j \geq 0 \text{ for all } j \in J ]$.

  To prove ``$\subseteq$'' of part~\ref{TheoremRadialConeSeparation}, we consider vectors $\hat{x} \in \radialcone{P}{v}$ and $\hat{y} \in \polarface{\blocker{P}}{v}$ and claim that $\scalprod{ \hat{x} }{ \hat{y} } \geq 1$.
  In particular, $\hat{y} \in \blocker{P}$, and hence there exists a vector $\bar{y} \leq \hat{y}$ with $\bar{y} \in \conv\{ y^{(i)} \mid i \in [m] \}$.
  
  From $\hat{y} \in \polarface{\blocker{P}}{v}$ and nonnegativity of $v$ we obtain $1 = \scalprod{v}{\hat{y}} \geq \scalprod{v}{\bar{y}}$.
  Since $\scalprod{v}{y^{(i)}} \geq 1$ holds for all $i \in [m]$, this implies $\scalprod{v}{\bar{y}} \geq 1$, and hence $\hat{y}_j = \bar{y}_j$ for all $j \in [d] \setminus J$.
  Furthermore, only $y^{(i)}$ for $i \in I$ can participate in the convex combination (of $\bar{y}$) with a strictly positive multiplier.
  Considering the inequalities that are valid for $\radialcone{P}{v}$, we observe that $\hat{x}_j \geq 0$ for all $j \in J$ and that $\scalprod{ \hat{x} }{y^{(i)}} \geq 1$ for all $i \in I$.
  This implies the desired inequality $\scalprod{ \hat{x} }{ \hat{y} } \geq \scalprod{ \hat{x} }{ \bar{y} } \geq 1$.

  It remains to prove ``$\supseteq$'' of part~\ref{TheoremRadialConeSeparation}.
  To this end, consider a vector $\hat{x}$ from the set on the right-hand side of the equation.
  For all $i \in I$, $y^{(i)} \in \polarface{\blocker{P}}{v}$ implies $\scalprod{y^{(i)}}{ \hat{x} } \geq 1$.
  Consider an arbitrary $\bar{y} \in \polarface{\blocker{P}}{v}$ and some $j \in J$.
  For all $\mu \geq 0$, we have $(\bar{y} + \mu \unitvec{j}) \in \polarface{\blocker{P}}{v}$.
  To see this, consider~\ref{EquationPolarFaceOuter} and observe that $\scalprod{v}{\unitvec{j}} = 0$ and that $\scalprod{x}{\unitvec{j}} \geq 0$ for all $x \in P$.
  In particular $1 \leq \scalprod{ \hat{x} }{ \bar{y} + \mu \unitvec{j} } = \scalprod{ \hat{x} }{ \bar{y} } + \mu \hat{x}_j$, which implies $\hat{x}_j \geq 0$ and concludes the proof.
\end{proof}
We conclude this section with the following result, which is an immediate consequence of Proposition~\ref{TheoremSeparation} and parts~\ref{TheoremPolarFaceSeparation} and~\ref{TheoremRadialConeSeparation} of Lemma~\ref{TheoremStructure}.
\begin{theorem}
  \label{TheoremRadialConeExtensionComplexity}
  Let $ P \subseteq \R_+^d$ be a blocking polyhedron and let $v \in P$.
  Then $\xc(\radialcone{P}{v})$ and $\xc( \polarface{ \blocker{P} }{v} )$ differ by at most $ 1 $.
\end{theorem}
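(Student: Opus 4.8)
The plan is to prove the claim as two separate inequalities, $\xc(\radialcone{P}{v}) \le \xc(\polarface{\blocker{P}}{v}) + 1$ and $\xc(\polarface{\blocker{P}}{v}) \le \xc(\radialcone{P}{v}) + 1$, each obtained by feeding one part of Lemma~\ref{TheoremStructure} into the separation principle of Proposition~\ref{TheoremSeparation}.

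For the first inequality I would simply read part~\ref{TheoremRadialConeSeparation} of Lemma~\ref{TheoremStructure}, that is $\radialcone{P}{v} = \setdef{x}[\scalprod{y}{x} \ge 1 ~\forall y \in \polarface{\blocker{P}}{v}]$, as an inequality description of $\radialcone{P}{v}$ of exactly the shape demanded by Proposition~\ref{TheoremSeparation}. Setting $Q := \polarface{\blocker{P}}{v}$ and $\gamma := 1$, the proposition yields $\xc(\radialcone{P}{v}) \le \xc(\polarface{\blocker{P}}{v}) + 1$ directly.

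For the reverse inequality I would start from part~\ref{TheoremPolarFaceSeparation} of Lemma~\ref{TheoremStructure}, which presents $\polarface{\blocker{P}}{v}$ as the set of all $y$ satisfying the single equation $\scalprod{v}{y} = 1$ together with the inequalities $\scalprod{x}{y} \ge 1$ for every $x \in \radialcone{P}{v}$. Introducing the auxiliary polyhedron $R := \setdef{y}[\scalprod{x}{y} \ge 1 ~\forall x \in \radialcone{P}{v}]$, Proposition~\ref{TheoremSeparation} applies with $Q := \radialcone{P}{v}$ and $\gamma := 1$ and gives $\xc(R) \le \xc(\radialcone{P}{v}) + 1$; here $Q$ is non-empty because $v \in \radialcone{P}{v}$. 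It then remains to account for the equation $\scalprod{v}{y} = 1$ without paying more. The key point is again that $v \in \radialcone{P}{v}$: this makes $\scalprod{v}{y} \ge 1$ one of the valid defining inequalities of $R$, so the hyperplane $\setdef{y}[\scalprod{v}{y} = 1]$ is supporting and $\polarface{\blocker{P}}{v} = R \cap \setdef{y}[\scalprod{v}{y} = 1]$ is a face of $R$. Proposition~\ref{TheoremExtensionBasics}~\ref{TheoremExtensionBasicsFace} then gives $\xc(\polarface{\blocker{P}}{v}) \le \xc(R) \le \xc(\radialcone{P}{v}) + 1$, as desired.

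Both directions are essentially bookkeeping once the correct identifications $Q = \polarface{\blocker{P}}{v}$ and $Q = \radialcone{P}{v}$ are made. The one place that needs genuine care is recognizing, in the second direction, that adjoining $\scalprod{v}{y} = 1$ cuts out a \emph{face} of $R$ rather than an arbitrary affine slice, so that the face bound of Proposition~\ref{TheoremExtensionBasics} applies; this hinges precisely on the fact that $v$ itself belongs to the radial cone. A minor secondary point is the degenerate situation in which $\polarface{\blocker{P}}{v}$ is empty (equivalently, $v$ lies in the relative interior of $P$, so that $\radialcone{P}{v}$ is all of $\R^d$): here the separation proposition does not apply in the first direction, but both extension complexities are trivially small and the claimed bound holds by inspection.
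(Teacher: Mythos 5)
Your two main steps coincide with the paper's own proof: the paper obtains the theorem as an immediate consequence of Proposition~\ref{TheoremSeparation} combined with parts~\ref{TheoremPolarFaceSeparation} and~\ref{TheoremRadialConeSeparation} of Lemma~\ref{TheoremStructure}, and your write-up --- part~\ref{TheoremRadialConeSeparation} fed into Proposition~\ref{TheoremSeparation} for one inequality, and part~\ref{TheoremPolarFaceSeparation} fed into Proposition~\ref{TheoremSeparation} together with the face bound of Proposition~\ref{TheoremExtensionBasics}~\ref{TheoremExtensionBasicsFace} to absorb the equation $\scalprod{v}{y} = 1$ --- is exactly how that one-line proof is meant to be expanded. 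Both directions are correct as you state them.

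Your closing remark on the degenerate case, however, is wrong. Emptiness of $\polarface{\blocker{P}}{v}$ is \emph{not} equivalent to $v$ lying in the relative interior of $P$: it only means that no inequality $\scalprod{y}{x} \geq 1$ with $y \in \blocker{P}$ is tight at $v$, while nonnegativity constraints may still be tight. Concretely, take $P = \setdef{x \in \R_+^3}[x_1 + x_2 + x_3 \geq 1]$ and $v = (0,0,2)$. Then $\blocker{P} = \setdef{y \in \R^3}[y_1 \geq 1, \, y_2 \geq 1, \, y_3 \geq 1]$, so $\scalprod{v}{y} \geq 2$ on $\blocker{P}$ and $\polarface{\blocker{P}}{v} = \emptyset$, whereas $\radialcone{P}{v} = \setdef{x \in \R^3}[x_1 \geq 0, \, x_2 \geq 0]$ has extension complexity $2$. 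Here the two quantities differ by $2$, so the statement itself fails in this case and no patch ``by inspection'' can work; in $\R^d$ the gap can grow to $d-1$. In fairness, the paper is silently exposed to the same issue: its proof of Lemma~\ref{TheoremStructure}~\ref{TheoremRadialConeSeparation} picks ``an arbitrary $\bar{y} \in \polarface{\blocker{P}}{v}$'' and thus presupposes non-emptiness, and indeed part~\ref{TheoremRadialConeSeparation} is false in the example above, which is precisely why your first direction breaks there. The honest fix is to add the hypothesis $\polarface{\blocker{P}}{v} \neq \emptyset$, equivalently that some inequality $\scalprod{y}{x} \geq 1$ with $y \in \blocker{P}$ is tight at $v$; this is automatic in all applications in the paper, where $v$ is a vertex of a blocking polyhedron $P \neq \R_+^d$, since a vertex at which only nonnegativity constraints are tight would have to be $\zerovec$, forcing $P = \R_+^d$.
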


\section{Radial cones of \texorpdfstring{$ T $}{T}-join and \texorpdfstring{$ T $}{T}-cut polyhedra}
\label{SectionMain}
In this section we will apply our structural results from the previous section to the radial cones of $T$-join and $T$-cut polyhedra.
These results relate the the extension complexities of radial cones to the extension complexities of certain faces of the blocker.
We start by reproving the result of Ventura and Eisenbrand~\cite{VenturaE03} for which we use the well-known theorem of Balas on unions of polyhedra.
\begin{proposition}[\cite{Balas74}]
  \label{TheoremBalas}
  Let $ P_1,\dotsc,P_k \subseteq \R^d $ be non-empty polyhedra, and let $ P $ be the closure of $ \conv ( P_1 \cup \dotsb \cup P_k ) $.
  Then $ \xc(P) \le \sum_{i=1}^k ( \xc(P_i) + 1 )$.
\end{proposition}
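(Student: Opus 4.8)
The plan is to use Balas' classical disjunctive construction: I would glue together minimum-size extensions of the individual pieces $P_i$ via homogenization and then verify that the projection of the resulting polyhedron is exactly $P$.

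First I would fix, for each $i \in [k]$, a minimum-size extension, i.e.\ a description $P_i = \setdef{ x \in \R^d }[ \exists z \in \R^{d_i} : A_i x + B_i z \le c_i ]$ in which the system on the right has exactly $\xc(P_i)$ inequalities. I would introduce, for every $i$, fresh variables $(y^{(i)}, w^{(i)}) \in \R^d \times \R^{d_i}$ and a scalar $\lambda_i \in \R$, and consider the polyhedron $Q$ (in the variables $x, y^{(i)}, w^{(i)}, \lambda_i$) defined by the equations $x = \sum_{i=1}^k y^{(i)}$ and $\sum_{i=1}^k \lambda_i = 1$ together with, for each $i$, the inequalities $A_i y^{(i)} + B_i w^{(i)} \le \lambda_i c_i$ and $\lambda_i \ge 0$. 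The intended reading is $y^{(i)} = \lambda_i x^{(i)}$ with $x^{(i)} \in P_i$: the homogenized block $A_i y^{(i)} + B_i w^{(i)} \le \lambda_i c_i$ together with $\lambda_i \ge 0$ encodes that $y^{(i)}/\lambda_i$ lies in (the extension of) $P_i$ when $\lambda_i > 0$, and that $y^{(i)}$ lies in the recession cone of that extension when $\lambda_i = 0$. Let $\pi$ denote the projection onto the $x$-coordinates.

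Next I would prove $\pi(Q) = P$. The inclusion $\pi(Q) \supseteq \conv(P_1 \cup \dotsb \cup P_k)$ is the easy direction: given a convex combination $x = \sum_i \lambda_i x^{(i)}$ with $x^{(i)} \in P_i$, $\lambda_i \ge 0$, $\sum_i \lambda_i = 1$, I would take witnesses $z^{(i)}$ for the $x^{(i)}$ and set $y^{(i)} = \lambda_i x^{(i)}$, $w^{(i)} = \lambda_i z^{(i)}$, which is feasible for $Q$. Since $\pi(Q)$ is closed (being the projection of a polyhedron), it then contains all of $P$. The reverse inclusion $\pi(Q) \subseteq P$ carries the actual content: from a feasible point of $Q$ I would split $[k]$ into $\setdef{i}[\lambda_i > 0]$ and $\setdef{i}[\lambda_i = 0]$; for the positive indices $x^{(i)} := y^{(i)}/\lambda_i$ lies in $P_i$ and these assemble into a genuine convex combination, while for the zero indices the $y^{(i)}$ are recession directions that must be absorbed by a limiting argument in order to land in $\overline{\conv(P_1 \cup \dotsb \cup P_k)} = P$.

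Finally I would count. The two blocks of equations only fix the affine hull of $Q$ and contribute no facets, so $\xc(P) \le \xc(\pi(Q))$ is bounded by the number of inequalities of $Q$, namely $\sum_{i=1}^k \xc(P_i)$ from the homogenized systems plus $k$ from the constraints $\lambda_i \ge 0$; this is exactly $\sum_{i=1}^k (\xc(P_i) + 1)$. I expect the main obstacle to be the $\subseteq$ direction of the identity $\pi(Q) = P$, and within it the clean handling of the indices with $\lambda_i = 0$: one has to argue that the recession contributions can be pushed into the limit, which is precisely why the statement concerns the \emph{closure} of the convex hull rather than the convex hull itself.
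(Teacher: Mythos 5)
The paper offers no proof of this proposition at all --- it is imported verbatim from Balas~\cite{Balas74} --- so there is no internal argument to compare against; your proposal is the classical disjunctive (homogenization) construction, and it is correct. The counting step is right: the equations $x = \sum_i y^{(i)}$ and $\sum_i \lambda_i = 1$ (and any equations appearing in the descriptions of the $P_i$, which can be homogenized in the same way) contribute no facets, so $Q$ is an extension of $\pi(Q) = P$ of size at most $\sum_{i=1}^k \xc(P_i) + k$. Two points should be made explicit when you expand the sketch of the inclusion $\pi(Q) \subseteq P$. First, for an index with $\lambda_i = 0$ the block reads $A_i y^{(i)} + B_i w^{(i)} \le 0$, so $(y^{(i)}, w^{(i)})$ is a recession direction of the \emph{extension} of $P_i$; to conclude that $y^{(i)} \in \operatorname{rec}(P_i)$ you need that taking recession cones commutes with linear projections, which holds for polyhedra but not for general closed convex sets. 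Second, the limiting argument is exactly where the non-emptiness hypothesis enters: choose $\bar{x}^{(i)} \in P_i$ for every index $i$ with $\lambda_i = 0$, and for small $\epsilon > 0$ form the convex combination that puts weight $\epsilon$ on the point $\bar{x}^{(i)} + \tfrac{1}{\epsilon} y^{(i)} \in P_i$ for each such index and weight $(1 - m\epsilon)\lambda_j$ on $x^{(j)}$ for the indices with $\lambda_j > 0$, where $m$ is the number of zero indices; letting $\epsilon \to 0$ shows $x \in \overline{\conv(P_1 \cup \dotsb \cup P_k)} = P$. With these two details filled in, the argument is complete, and it also makes transparent why the statement concerns the closure of the convex hull, as your proposal correctly anticipates.
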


\begin{theorem}[Ventura \& Eisenbrand, 2003~\cite{VenturaE03}]
  \label{TheoremTJoinRadialCone}
  For every set $ T \subseteq V_n $ with $ |T| $ even and every vertex $ v $ of $ \dominant{\PolyTJoin{n}} $ corresponding to a $ T $-join $ J \subseteq E_n $ in $ K_n $, the extension complexity of the radial cone of $ \dominant{\PolyTJoin{n}} $ at $ v $ is most $ \orderO{ |J| \cdot n^2 } $.
\end{theorem}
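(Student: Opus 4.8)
The plan is to pass from the radial cone to a face of the blocker and to bound the latter. Write $ P := \dominant{\PolyTJoin{n}} $, so that $ \blocker{P} = \dominant{\PolyTCut{n}} $, and let $ v = \chi^J $ be the vertex corresponding to the minimal $ T $-join $ J $. By Theorem~\ref{TheoremRadialConeExtensionComplexity} it suffices to bound $ \xc(\polarface{\blocker{P}}{v}) $, where $ \polarface{\blocker{P}}{v} = \setdef{ y \in \dominant{\PolyTCut{n}} }[ y(J) = 1 ] $. Since every $ T $-cut meets every $ T $-join in a positive number of edges, the functional $ y \mapsto y(J) $ attains its minimum $ 1 $ over $ \blocker{P} $, so this face equals $ \conv\setdef{ \chi^C }[ C \text{ a minimal } T\text{-cut},\ |C \cap J| = 1 ] + \cone(\setdef{ \unitvec{f} }[ f \notin J ]) $. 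Each such $ C $ meets $ J $ in a single edge $ e \in J $, so grouping the generators by this edge gives $ \polarface{\blocker{P}}{v} = \conv\big( \bigcup_{e \in J} F_e \big) $ with $ F_e := \conv\setdef{ \chi^C }[ C \text{ a minimal } T\text{-cut},\ C \cap J = \{e\} ] + \cone(\setdef{ \unitvec{f} }[ f \notin J ]) $. As all $ F_e $ are non-empty and share the same recession cone, Proposition~\ref{TheoremBalas} will yield $ \xc(\polarface{\blocker{P}}{v}) \le \sum_{e \in J} ( \xc(F_e) + 1 ) $, reducing everything to the bound $ \xc(F_e) = \orderO{n^2} $.

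The heart of the matter is to describe the cuts defining $ F_e $. Fix $ e = \{a,b\} $ and view $ J $ as a forest. I claim that the $ T $-cuts $ C $ with $ C \cap J = \{e\} $ are exactly the cuts $ \delta(S) $ for which $ S $ is a union of connected components of $ J - e $ with $ a \in S $ and $ b \notin S $. The requirement $ \delta(S) \cap J = \{e\} $ means precisely that every edge of $ J \setminus \{e\} $ keeps both endpoints on one side of $ S $ while $ e $ is cut, which is the stated condition. The parity constraint ``$ |S \cap T| $ odd'' is then automatic, because $ |\delta(S) \cap J| \equiv |S \cap T| \pmod 2 $ for every $ S $ (as $ J $ is a $ T $-join), so a single crossing edge forces $ |S \cap T| $ to be odd. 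Consequently, after fixing the coordinates $ x_e = 1 $ and $ x_{e'} = 0 $ for $ e' \in J \setminus \{e\} $ (which hold throughout $ F_e $), the set $ F_e $ is nothing but the $ a $--$ b $ cut dominant of the complete graph obtained from $ K_n $ by contracting each connected component of $ J - e $ to a single node.

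This last identification is the step I expect to require the most care, but it immediately produces an extended formulation of size $ \orderO{n^2} $ via node potentials, exactly as for the $ s $--$ t $ cut dominant (the shortest-path / max-flow--min-cut formulation). Concretely, I would consider
\[
  R_e := \setdef{ x \in \R^{E_n} }[ \exists\, \pi \in \R^{V_n} :\ \pi_a - \pi_b \ge 1,\ \pi_u = \pi_w\ \forall \{u,w\} \in J \setminus \{e\},\ x_f \ge |\pi_u - \pi_w|\ \forall f = \{u,w\} \in E_n,\ x_e = 1,\ x_{e'} = 0\ \forall e' \in J \setminus \{e\} ],
\]
which, reading each $ x_f \ge |\pi_u - \pi_w| $ as the two inequalities $ x_f \ge \pi_u - \pi_w $ and $ x_f \ge \pi_w - \pi_u $, uses $ n $ auxiliary variables and $ \orderO{n^2} $ inequalities. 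To prove $ R_e = F_e $, for ``$ \supseteq $'' I evaluate each generator $ \chi^{\delta(S)} $ using the potential $ \pi = \chi^S $ and invoke convexity together with closure under adding $ \unitvec{f} $, $ f \notin J $; for ``$ \subseteq $'' I observe that the constraints force $ x(Q) \ge \pi_a - \pi_b \ge 1 $ along every $ a $--$ b $ path $ Q $ in $ K_n $, so $ x $ lies in the $ a $--$ b $ cut dominant of $ K_n $, and decomposing $ x $ into $ a $--$ b $ cuts plus a nonnegative remainder while using the equalities $ x_{e'} = 0 $ and $ x_e = 1 $ forces every cut used to be a $ \delta(S) $ with $ S $ a union of components of $ J - e $ and the remainder to be supported off $ J $, placing $ x $ in $ F_e $.

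Putting the pieces together, $ \xc(F_e) \le \xc(R_e) = \orderO{n^2} $ for each $ e \in J $, so Proposition~\ref{TheoremBalas} gives $ \xc(\polarface{\blocker{P}}{v}) \le \sum_{e \in J} ( \orderO{n^2} + 1 ) = \orderO{ |J| \cdot n^2 } $, and Theorem~\ref{TheoremRadialConeExtensionComplexity} transfers this to $ \xc(\radialcone{P}{v}) \le \xc(\polarface{\blocker{P}}{v}) + 1 = \orderO{ |J| \cdot n^2 } $, as desired. The main obstacle is the middle step: matching the $ T $-cuts that cross $ J $ exactly once with the $ a $--$ b $ cuts that respect the component structure of $ J - e $ (the parity being automatic), and checking that the coordinate equalities pin the potential formulation down to $ F_e $ itself rather than to its dominant.
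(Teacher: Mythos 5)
Your proof is correct, and its skeleton coincides with the paper's: both pass via Theorem~\ref{TheoremRadialConeExtensionComplexity} to the face $\setdef{y \in \dominant{\PolyTCut{n}}}[\scalprod{v}{y} = 1]$ of the blocker, both decompose this face according to the unique edge of $J$ met by each generating $T$-cut and apply Balas' theorem (Proposition~\ref{TheoremBalas}), and both thereby reduce everything to showing that each piece $F_e$ has extension complexity $\orderO{n^2}$. The two arguments part ways only in this last step. The paper identifies $F_e$ (there called $F_m$) with a face $G_m$ of the $T'$-cut polyhedron for $T'$ the two endpoints of the edge, and then invokes the general upper bound~\eqref{ExtensionComplexityTCutPoly} (proved in the appendix via flow formulations and Proposition~\ref{TheoremSeparation}), which is $\orderO{n^2}$ because $|T'| = 2$. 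You instead identify $F_e$ with the $a$--$b$ cut dominant of the graph obtained by contracting the components of $J - e$, and write down an explicit node-potential formulation $R_e$ of size $\orderO{n^2}$, verifying $R_e = F_e$ directly: generators are handled by the potential $\pi = \chi^S$, and the reverse inclusion follows from the path inequalities $x(Q) \ge \pi_a - \pi_b \ge 1$ together with a decomposition of $x$ into $a$--$b$ cuts, where the equalities $x_{e'} = 0$ ($e' \in J \setminus \{e\}$) and $x_e = 1$ force each cut in the decomposition to respect the components of $J - e$; this argument is sound. Your parity observation $|\delta(S) \cap J| \equiv |S \cap T| \pmod 2$ is a cleaner and more general form of the paper's argument, which instead uses the decomposition of the minimal $T$-join into paths with endnodes in $T$. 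What your route buys is self-containedness: it needs neither the appendix bound nor Proposition~\ref{TheoremSeparation} for this step. What the paper's route buys is brevity: once $F_m = G_m$ is established, the $\orderO{n^2}$ bound comes for free from machinery the paper develops anyway for~\eqref{ExtensionComplexityTCutPoly}.
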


The crucial observation for (re)proving the result is that the facets of the $T$-cut polyhedra have small extension complexities.

\DeclareDocumentCommand\PolySTCut{oom}{\ensuremath{P_{\text{\IfValueTF{#1}{$#1$-$#2$-cut}{$s$-$t$}-cut}}(#3)}}%

\begin{proof}
  By Theorem~\ref{TheoremRadialConeExtensionComplexity} it suffices to prove that the extension complexity of
  \[
    P := \setdef{x \in \dominant{\PolyTCut{n}}}[\scalprod{v}{x} = 1]
  \]
  is at most $ \orderO{ |J| \cdot n^2} $.
  A vector $y \in \R^{E_n}$ is in the recession cone $ C $ of $P$ if and only if it is nonnegative and $\scalprod{v}{y} = 0$ holds.
  Thus, $ C $ is generated by all unit vectors corresponding to edges in $ E_n \setminus J $.
  For every edge $ m \in J $ we consider the set
  \[
    F_m := \setdef{ x \in P }[ x_e = 0 ~\forall e \in J \setminus \setdef{m} ],
  \]
  which is a face of $ P $.
  Note that since $ \scalprod{v}{x} = 1$ is valid for $ F $, so is $ x_m = 1 $.
  It is easy to see that $F_m$ also has $ C $ as its recession cone.
  Every vertex $ w $ of $ P $ satisfies $w_m = 1$ for some edge $ m \in J $, and thus $ w \in F_m $, which (since $ P $ and all faces $ F_m $ have the same recession cone) proves
  \[
    P = \conv( \bigcup_{m \in J} F_m ).
  \]
  Hence, by Proposition~\ref{TheoremBalas}, $ \xc(P) \leq |J| \cdot ( \xc(F_m) + 1 ) $ holds, and it remains to prove $ \xc(F_m) \leq \orderO{ n^2 } $ for all $ m \in J $.
  We claim that $ F_m $ is equal to
  \[
    G_m := \setdef{ x \in \dominant{\PolyTCut[T']{n}} }[ x_m = 1, ~ x_e = 0 ~\forall e \in J \setminus \setdef{m} ],
  \]
  where $ T' := m $ is the set containing the two endnodes of $ m $.
  Note that $G_m$ is a face of $ \dominant{\PolyTCut[T']{n}} $ and hence both polyhedra are integral.
  Moreover, $ G_m $ also has $ C $ as its recession cone.
  To see that also their vertex sets agree, consider a cut $ \delta(S) $ for some $ S \subseteq V $.
  If $ \delta(S) $ is a $ T $-cut that contains $ m $, then $ \delta(S) $ is also a $ T' $-cut.
  Suppose $ \delta(S) $ is a $ T' $-cut with $ \delta(S) \cap J = \setdef{m} $.
  Since $ J $ is the edge-disjoint union of paths whose endnodes are distinct and in $ T $, all such paths, except for the one that contains edge $ m $, have both endnodes either in $ S $ or in $ V_n \setminus S$.
  This shows that $ |S \cap T| $ is odd and hence that $ \delta(S) $ is a $ T $-cut.
  This concludes the proof of the claim that $F_m = G_m$ holds.

  Since $T'$ contains exactly two nodes, Proposition~\ref{TheoremExtensionBasics}~\ref{TheoremExtensionBasicsFace} and the upper bound from~\eqref{ExtensionComplexityTCutPoly} already yield $\xc(G_m) \leq \xc(\PolyTCut[T']{n}) \leq \orderO{ n^2 }$, which concludes the proof.
\end{proof}

\begin{corollary}[Proposition~2.1 in~Ventura \& Eisenbrand, 2003~\cite{VenturaE03}]
  For every $ n $ and every vertex $ v $ of $ \PolyPerfMatch{n} $, the extension complexity of the radial cone of $ \PolyPerfMatch{n} $ at $ v $ is most $ \orderO{ n^3 } $.
\end{corollary}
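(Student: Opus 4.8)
The plan is to derive the corollary as a direct specialization of Theorem~\ref{TheoremTJoinRadialCone}, using that the perfect-matching polytope embeds into a $V_n$-join polyhedron as a face and that the extension complexity of a radial cone only decreases when passing to faces through the same point.

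First I would dispose of the degenerate case: if $n$ is odd then $K_n$ has no perfect matching, so $\PolyPerfMatch{n}$ is empty and the statement is vacuous. Hence I assume $n$ is even and set $T := V_n$, so that $|T| = n$ is even and the associated $T$-join polyhedron is the $V_n$-join polyhedron $\dominant{\PolyTJoin[V_n]{n}}$.

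Next I would invoke the identification recorded just before~\eqref{eqIFOR1}: with $T = V_n$, the polytope $\PolyPerfMatch{n}$ is precisely the face $F$ of $\dominant{\PolyTJoin[V_n]{n}}$ obtained by turning the inequalities $x(\delta(u)) \ge 1$ into equalities for every $u \in V_n$. In particular, any vertex $v$ of $\PolyPerfMatch{n}$ is a vertex of the face $F$, and therefore also a vertex of the ambient polyhedron $\dominant{\PolyTJoin[V_n]{n}}$; such a vertex corresponds to a perfect matching $M$, which is a $V_n$-join $J = M$ consisting of exactly $n/2$ edges.

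Finally I would chain two already-available bounds. Applying Proposition~\ref{TheoremExtensionBasics}~\ref{TheoremExtensionBasicsRadialConeFace} to the face $F = \PolyPerfMatch{n}$ of $P := \dominant{\PolyTJoin[V_n]{n}}$ through the point $v$ gives $\xc(\radialcone{\PolyPerfMatch{n}}{v}) \le \xc(\radialcone{P}{v})$, while Theorem~\ref{TheoremTJoinRadialCone} with $T = V_n$ and $J = M$ bounds the right-hand side by $\orderO{|J| \cdot n^2} = \orderO{(n/2)\cdot n^2} = \orderO{n^3}$. Combining the two inequalities yields $\xc(\radialcone{\PolyPerfMatch{n}}{v}) \le \orderO{n^3}$, as claimed. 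Since every step is a consequence of results already proved, I expect no real obstacle; the only point demanding care is the bookkeeping in the third step---verifying that $\PolyPerfMatch{n}$ is genuinely the stated face and that its vertices are exactly the perfect matchings---because it is precisely the value $|J| = n/2$ of the matching size that makes the $V_n$-join bound collapse to $\orderO{n^3}$.
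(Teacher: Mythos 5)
Your proof is correct and follows exactly the paper's own argument: combine Theorem~\ref{TheoremTJoinRadialCone} with Proposition~\ref{TheoremExtensionBasics}~\ref{TheoremExtensionBasicsRadialConeFace}, using that $\PolyPerfMatch{n}$ is a face of the $V_n$-join polyhedron and that the vertex corresponds to a perfect matching with $|J| = n/2$ edges, which makes the bound $\orderO{|J| \cdot n^2} = \orderO{n^3}$. The extra bookkeeping you include (the odd-$n$ vacuous case and the explicit identification of the face) is fine but not a departure from the paper's proof.
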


\begin{proof}
  The result follows from Theorem~\ref{TheoremTJoinRadialCone} and Proposition~\ref{TheoremExtensionBasics}~\ref{TheoremExtensionBasicsRadialConeFace}, using the fact that
  $ \PolyPerfMatch{n} $ is a face of $ \dominant{\PolyTJoin{n}} $ (see Section~\ref{SectionOverview}).
  Note that the bound is cubic since $ v $ corresponds to a perfect matching, which consists of $ n / 2 $ edges.
\end{proof}

\begin{corollary}
  For every $ n $ and every $ v \in \dominant{\PolyTJoin{n}} $, the extension complexity of the radial cone of $ \dominant{\PolyTJoin{n}} $ at $ v $ is most $ \orderO{ n^4 } $.
\end{corollary}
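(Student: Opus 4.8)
The plan is to reduce the general statement to the case of a vertex, which is exactly Theorem~\ref{TheoremTJoinRadialCone}, and to carry out the reduction on the blocker side, where it becomes a relation between faces. Write $ P := \dominant{\PolyTJoin{n}} $, so that $ \blocker{P} = \dominant{\PolyTCut{n}} $. By Theorem~\ref{TheoremRadialConeExtensionComplexity} it suffices to bound $ \xc(\polarface{\blocker{P}}{v}) $ by $ \orderO{n^4} $, since this quantity differs from $ \xc(\radialcone{P}{v}) $ by at most one. Fix $ v \in P $ and let $ F $ be the minimal face of $ P $ containing $ v $, so that $ v $ lies in the relative interior of $ F $. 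Since $ P \subseteq \R_+^{E_n} $ is pointed, $ F $ has a vertex $ w $; being a vertex of $ P $, it corresponds to a $ T $-join $ J \subseteq E_n $.

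The heart of the argument is to show that $ \polarface{\blocker{P}}{v} $ is a face of $ \polarface{\blocker{P}}{w} $. First I would establish that $ \polarface{\blocker{P}}{v} = \setdef{ y \in \blocker{P} }[ \scalprod{v'}{y} = 1 \text{ for all } v' \in F ] $. The inclusion ``$ \supseteq $'' is immediate from $ v \in F $. For ``$ \subseteq $'', take $ y \in \blocker{P} $ with $ \scalprod{v}{y} = 1 $; the affine function $ v' \mapsto \scalprod{v'}{y} $ is at least $ 1 $ on $ F \subseteq P $ and attains the value $ 1 $ at the relative-interior point $ v $, hence is constant equal to $ 1 $ on all of $ F $. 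Because $ w \in F $, this description exhibits $ \polarface{\blocker{P}}{v} $ as $ \polarface{\blocker{P}}{w} $ intersected with the equalities $ \scalprod{v'}{y} = 1 $ for $ v' \in F $; as each inequality $ \scalprod{v'}{y} \ge 1 $ is valid for $ \blocker{P} \supseteq \polarface{\blocker{P}}{w} $, imposing these equalities cuts out an intersection of faces of $ \polarface{\blocker{P}}{w} $, which is again a face.

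With this in hand the bound follows by chaining results that are already available. Proposition~\ref{TheoremExtensionBasics}~\ref{TheoremExtensionBasicsFace} gives $ \xc(\polarface{\blocker{P}}{v}) \le \xc(\polarface{\blocker{P}}{w}) $; Theorem~\ref{TheoremRadialConeExtensionComplexity} applied at the vertex $ w $ gives $ \xc(\polarface{\blocker{P}}{w}) \le \xc(\radialcone{P}{w}) + 1 $; and Theorem~\ref{TheoremTJoinRadialCone} gives $ \xc(\radialcone{P}{w}) \le \orderO{ |J| \cdot n^2 } $. Since $ J \subseteq E_n $ we have $ |J| \le \binom{n}{2} = \orderO{n^2} $, so altogether $ \xc(\radialcone{P}{v}) \le \xc(\polarface{\blocker{P}}{v}) + 1 \le \orderO{n^4} $, as desired.

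I expect the only real obstacle to be the face claim of the second paragraph, and within it the observation that a valid inequality tight at a relative-interior point of $ F $ must be tight on all of $ F $; this is what upgrades the single equation $ \scalprod{v}{y} = 1 $ into the family of equations exposing a face of $ \polarface{\blocker{P}}{w} $. Once this structural fact is in place, the remaining steps are a mechanical combination of Theorem~\ref{TheoremRadialConeExtensionComplexity}, Proposition~\ref{TheoremExtensionBasics}, and the previously proved vertex case. (Alternatively, one could avoid the blocker by observing that $ \radialcone{P}{v} $ is obtained from $ \radialcone{P}{w} $ by adding the linear space parallel to $ F $, which does not increase the extension complexity; the blocker route, however, keeps everything within the framework developed above.)
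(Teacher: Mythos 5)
Your proof is correct and takes essentially the same approach as the paper: choose a vertex $ w $ of the minimal face of $ P = \dominant{\PolyTJoin{n}} $ containing $ v $, show that $ \polarface{\blocker{P}}{v} $ is a face of $ \polarface{\blocker{P}}{w} $, and chain Theorem~\ref{TheoremRadialConeExtensionComplexity}, Proposition~\ref{TheoremExtensionBasics}~\ref{TheoremExtensionBasicsFace}, and Theorem~\ref{TheoremTJoinRadialCone} exactly as the paper does. The only difference is cosmetic: you establish the face relation directly via the relative-interior property of the minimal face, whereas the paper deduces the inclusion $ \polarface{\blocker{P}}{v} \subseteq \polarface{\blocker{P}}{w} $ from $ \radialcone{P}{w} \subseteq \radialcone{P}{v} $ using Lemma~\ref{TheoremStructure} and then uses that both sets are faces of $ \blocker{P} $.
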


\begin{proof}
  Let $ P := \dominant{ \PolyTJoin{n} }$ and let $ w $ be a vertex of $ P $ in the smallest face that contains $ v $.
  Theorem~\ref{TheoremTJoinRadialCone} implies that the extension complexity of $ \radialcone{ P }{ w } $ ist at most $ \orderO{ n^4 } $.
  By definition of the radial cone, $\radialcone{ P }{ w } \subseteq \radialcone{ P }{ v } $,
  and thus, by Lemma~\ref{TheoremStructure}, $ \polarface{\blocker{P}}{v} \subseteq \polarface{\blocker{P}}{w} $.
  Using the fact that $ \polarface{\blocker{P}}{v} $ and $ \polarface{\blocker{P}}{w} $ are faces of $\blocker{P}$,
  this implies that $ \polarface{\blocker{P}}{v} $ is a face of $ \polarface{\blocker{P}}{w} $.
  Theorem~\ref{TheoremRadialConeExtensionComplexity} and Proposition~\ref{TheoremExtensionBasics}~\ref{TheoremExtensionBasicsFace} yield
  \[
    \xc( \radialcone{ P }{ v } ) 
    \leq \xc( \polarface{\blocker{P}}{v} ) + 1
    \leq \xc ( \polarface{\blocker{P}{w}} ) + 1
    \leq \xc( \radialcone{ P }{ w } ) + 2
    \leq \orderO{ n^4 },
  \]
  which concludes the proof.
\end{proof}

We continue with the main result of this paper.
To prove it, we again relate the extension complexity of the radial cones to the extension complexities of certain faces of the blocker, i.e., the $T$-join polyhedron.
In contrast to the situation for Theorem~\ref{TheoremTJoinRadialCone}, these faces are again very related to $T$-join polyhedra, and thus have high extension complexities.

\begin{theorem}
  \label{TheoremMain}
  For sets $ T \subseteq V_n $ with $ |T| $ even and vertices $ v $ of $ \dominant{\PolyTCut{n}} $,
  the extension complexity of the radial cone of $ \dominant{\PolyTCut{n}} $ at $ v $ is at least $2^{\orderOmega{|T|}}$.
\end{theorem}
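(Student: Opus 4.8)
The plan is to translate the problem, through the blocking machinery of Section~\ref{SectionBlocking}, into a lower bound on the extension complexity of a suitable face of the $T$-join polyhedron, and then to exhibit inside that face a smaller $T'$-join polyhedron to which Rothvoß' bound~\eqref{eqIFOR1} applies. First I would set $P := \dominant{\PolyTCut{n}}$, whose blocker is $\blocker{P} = \dominant{\PolyTJoin{n}}$. By Theorem~\ref{TheoremRadialConeExtensionComplexity} it suffices to bound $\xc(\polarface{\blocker{P}}{v})$ from below, since the two quantities differ by at most $1$. Writing the vertex $v$ as the characteristic vector of a $T$-cut $\delta(S)$ (so that $|S \cap T|$ is odd), the relevant face is
\[
  \polarface{\blocker{P}}{v} = \setdef{ y \in \dominant{\PolyTJoin{n}} }[ y(\delta(S)) = 1 ],
\]
i.e.\ the $T$-join vectors whose total weight across the cut $\delta(S)$ equals $1$.

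Next I would cut this face down to a product of two smaller $T$-join polyhedra. Fix a single edge $e = \{a,b\} \in \delta(S)$ with $a \in S$ and $b \in V_n \setminus S$ (such an edge exists because $|S\cap T|$ and $|(V_n\setminus S)\cap T|$ are both odd, hence positive). Let $\mathcal{F}$ be the face of $\polarface{\blocker{P}}{v}$ obtained by additionally imposing $y_{e'} = 0$ for every crossing edge $e' \in \delta(S) \setminus \setdef{e}$; this forces $y_e = 1$. I claim that projecting $\mathcal{F}$ onto the coordinates indexed by the edges inside $S$ yields exactly $\dominant{\PolyTJoin[T_S]{\lvert S\rvert}}$ on $K_n[S] \cong K_{\lvert S\rvert}$, where $T_S := (T \cap S) \triangle \setdef{a}$, and symmetrically for the edges inside $V_n \setminus S$ with $T_{\overline{S}} := (T \cap (V_n\setminus S)) \triangle \setdef{b}$. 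Both sets have even cardinality, and $\lvert T_S\rvert + \lvert T_{\overline S}\rvert \geq \lvert T\rvert - 2$, so one of them---say $T_S$ after relabeling---satisfies $\lvert T_S\rvert \geq (\lvert T\rvert-2)/2$. Granting the claim, Proposition~\ref{TheoremExtensionBasics}~\ref{TheoremExtensionBasicsFace} and~\ref{TheoremExtensionBasicsProjection} give
\[
  \xc(\polarface{\blocker{P}}{v}) \geq \xc(\mathcal{F}) \geq \xc(\dominant{\PolyTJoin[T_S]{\lvert S\rvert}}) \geq 2^{\orderOmega{\lvert T_S\rvert}} = 2^{\orderOmega{\lvert T\rvert}},
\]
where the third inequality is~\eqref{eqIFOR1} applied to the smaller instance. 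Combined with Theorem~\ref{TheoremRadialConeExtensionComplexity}, this yields the theorem.

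The crux---and the step I expect to be the main obstacle---is verifying the projection claim, which rests on a parity argument. For the ``$\supseteq$'' direction I would lift any $z \in \dominant{\PolyTJoin[T_S]{\lvert S\rvert}}$ to a point of $\mathcal{F}$ by taking $y_e = 1$, the remaining crossing coordinates $0$, the $S$-coordinates equal to $z$, and the $(V_n\setminus S)$-coordinates equal to any point of $\dominant{\PolyTJoin[T_{\overline S}]{\lvert V_n\setminus S\rvert}}$, and then check every $T$-cut inequality $y(\delta(W)) \geq 1$: if $e \in \delta(W)$ this is immediate from $y_e = 1$, and otherwise $a,b$ lie on the same side of $W$, whence a short parity computation (using that $|W \cap T|$ is odd) shows that $\delta(W)$ restricts to a $T_S$-cut inside $S$ or to a $T_{\overline S}$-cut inside $V_n\setminus S$, so that one of the two factor inequalities applies. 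The ``$\subseteq$'' direction is dual: for each $T_S$-cut inside $S$ determined by $U \subseteq S$ one exhibits a $T$-cut $\delta(W)$ of $K_n$ (taking $W = U$ or $W = U \cup (V_n \setminus S)$ according to whether $a \notin U$ or $a \in U$) whose only crossing edges other than possibly $e$ carry weight $0$, so that $y(\delta(W)) \geq 1$ collapses to the desired inequality on the $S$-coordinates. Once this bookkeeping is in place the argument is complete, since all remaining steps are applications of results already established.
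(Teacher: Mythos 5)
Your proposal is correct and follows essentially the same route as the paper: reduce via Theorem~\ref{TheoremRadialConeExtensionComplexity} to the face $\polarface{\blocker{P}}{v}$ of the $T$-join polyhedron, pass to the subface where all crossing edges but one are zero, recognize two smaller $T$-join polyhedra inside it, and invoke~\eqref{eqIFOR1}. The only (immaterial) difference is that the paper keeps a distinguished edge $\{t_1,t_2\}$ between two $T$-nodes and also zeroes out the remaining edges at $t_1,t_2$, so that the face becomes an exact Cartesian product with terminal sets $(T\cap U_i)\setminus\{t_i\}$, whereas you keep an arbitrary crossing edge $\{a,b\}$ and absorb its endpoints via the parity twist $T_S = (T\cap S)\triangle\{a\}$ followed by a projection; your parity bookkeeping in both directions of the projection claim is sound.
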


\begin{proof}
  By Theorem~\ref{TheoremRadialConeExtensionComplexity} it suffices to prove that the extension complexity of
  \[
    P := \setdef{x \in \dominant{\PolyTJoin{n}}}[\scalprod{v}{x} = 1]
  \]
  is at least $2^{\orderOmega{|T|}}$.
  To this end, we will construct a face $ Q $ of $ P $ that is a Cartesian product of a $ T_1 $-join polyehdron, a single point, and a $ T_2 $-join polyhedron for some $ T_1, T_2 \subseteq T $ with $ |T_1| + |T_2| + 2 = |T| $.
  Note that, by Proposition~\ref{TheoremExtensionBasics} and Inequality~\eqref{eqIFOR1}, this will imply
  \[
    \xc(P) \ge \xc(Q) \ge \max \left( \xc(\PolyTJoin[T_1]{n_1}), \xc(\PolyTJoin[T_2]{n_2}) \right) \ge 2^{\orderOmega{|T|}}.
  \]
  For subsets $ V_1, V_2 \subseteq V $, we will use the notation $ V_1 : V_2 := \setdef{\{v_1,v_2\}}[v_1 \in V_1, \, v_2 \in V_2] $ as well as $ E(V_1) := \setdef{\{v,w\}}[v,w \in V_1, \, v \ne w] $.
  Recall that $ v \in \R^{E_n} $ is a vertex of $ \dominant{\PolyTCut{n}} $ and hence we can partition $ V $ into sets $ U_1,U_2 $ with $ |T \cap U_1| $ odd and $ |T \cap U_2| $ odd, such that $ v $ is the characteristic vector of $ U_1:U_2 $.
  With this notation the set $ P $ can be rewritten as
  \[
    P = \setdef{x \in \dominant{\PolyTJoin{n}}}[x(U_1:U_2) = 1].
  \]
  Fix $ t_1 \in T \cap U_1 $ and $ t_2 \in T \cap U_2 $, and define
  \begin{align*}
    V_i & := U_i \setminus \{t_i\}, \\
    T_i & := (T \cap U_i) \setminus \{t_i\} \quad i=1,2.
  \end{align*}
  Let 
  \[
    F := (V_1:V_2) \cup (V_1 : \{t_1,t_2\}) \cup (V_2 : \{t_1,t_2\})
  \]
  denote the set of edges that lie between (any two of) the three sets $ V_1 $, $ V_2 $, and $ \{t_1,t_2\} $, and consider the set
  \[
    Q := \setdef{x \in P}[x_e = 0 \text{ for all } e \in F],
  \]
  which is a face of $ P $.
  The support of each point $ x \in Q $ is contained in $ E(V_1) \cup E(V_2) \cup \setdef{\{t_1,t_2\}} $.
  Furthermore, for each $ x \in Q $ we have
  \[
    x_{\{t_1,t_2\}}
    = \underbrace{x(V_1:V_2)}_{=0} + \underbrace{x(\{t_1\}:V_2)}_{=0} + \underbrace{x(V_1:\{t_2\})}_{=0} + x_{\{t_1,t_2\}}
    = x(U_1:U_2)
    = 1,
  \]
  and hence $ Q = \setdef{x \in \dominant{\PolyTJoin{n}}}[x_e = 0 \text{ for all } e \in F, \, x_{\{t_1,t_2\}} = 1] $.
  By~\eqref{eqDescriptionTJoin}, we thus obtain
  \begin{alignat}{10}
    \label{eqQ1}
    Q = \Big\{ x \in \R^{E_n}_+ : & & x_e & = 1 \text{ for } e = \{t_1,t_2\}, \\
    \label{eqQ2}
                              & &            x_e & =   0 \text{ for all } e \in F, \\
    \label{eqQ3}
                              & &   x(S : (V \setminus S)) & \ge 1 \text{ for all } S \subseteq V \text{ with } |T \cap S| \text{ odd } \Big\}.
  \end{alignat}
  We claim that $ Q $ is equal to
  \begin{alignat}{10}
    \label{eqtQ1}
    \tilde{Q} = \Big\{ x \in \R^{E_n}_+ : & & x_e & = 1 \text{ for } e = \{t_1,t_2\}, \\
    \label{eqtQ2}
                              & &            x_e & =   0 \text{ for all } e \in F, \\
    \label{eqtQ3}
                              & &   x(S_1 : (V_1 \setminus S_1)) & \ge 1 \text{ for all } S_1 \subseteq V_1 \text{ with } |T_1 \cap S_1| \text{ odd}, \\
    \label{eqtQ4}
                              & &   x(S_2 : (V_2 \setminus S_2)) & \ge 1 \text{ for all } S_2 \subseteq V_2 \text{ with } |T_2 \cap S_2| \text{ odd } \Big\}.
  \end{alignat}
  Note that this establishes our main claim since, by~\eqref{eqDescriptionTJoin}, $ \tilde{Q} $ is the Cartesian product of a $ T_1 $-join polyhedron (with respect to the complete graph formed by the nodes of $ V_1 $), a $ T_2 $-join polyhedron (with respect to the complete graph formed by the nodes of $ V_2 $), and a set consisting of a single vector in $ \R^{F \cup \{t_1,t_2\}} $ (defined by~\eqref{eqtQ1} and~\eqref{eqtQ2}).

  To this end, first note that the constraints in~\eqref{eqQ1} and~\eqref{eqQ2} are identical to~\eqref{eqtQ1} and~\eqref{eqtQ2}.
  To see that $ Q \subseteq \tilde{Q} $, let $ x \in Q $, $ i \in \{1,2\} $, and $ S_i \subseteq V_i $ with $ |T_i \cap S_i| $ odd.
  By~\eqref{eqQ2} we have $ x(S_i : (V_i \setminus S_i)) = x(S_i : (V \setminus S_i)) $.
  Since $ |S_i \cap T_i| = |S_i \cap T| $ is odd, by~\eqref{eqQ3} we also have $ x(S_i : (V \setminus S_i)) \ge 1 $, which shows that the constraints in~\eqref{eqtQ3} and~\eqref{eqtQ4} are satisfied and hence $ x \in \tilde{Q} $.

  To see that $ \tilde{Q} \subseteq Q $, let $ x \in \tilde{Q} $ and $ S \subseteq V $ with $ |T \cap S| $ odd.
  If $ |S \cap \{t_1,t_2\}| = 1 $, then the nonnegativity of $ x $ and~\eqref{eqtQ1} already imply
  \[
    x(S : (V \setminus S)) \ge x_{\{t_1,t_2\}} = 1.
  \]
  Otherwise, we have $ |S \cap \{t_1,t_2\}| \in \{0,2\} $, define $ S_i := S \cap V_i $ for $ i=1,2 $.
  Since $ |S \cap T| = |S \cap \{t_1,t_2\}| + |S_1 \cap T_2| + |S_2 \cap T_2| $ is odd, we must have that $ |S_{i^*} \cap T_{i^*}| $ is also odd for some $ i^* \in \{1,2\} $.
  By the constraints in~\eqref{eqtQ3} and~\eqref{eqtQ4}, this implies $ x(S_{i^*} : (V_{i^*} \setminus S_{i^*})) \ge 1 $.
  We finally obtain
  \[ 
    x(S : (V \setminus S)) = x(S_1 : (V_1 \setminus S_1)) + x(S_2 : (V_2 \setminus S_2))
    \ge x(S_{i^*} : (V_{i^*} \setminus S_{i^*})) \ge 1,
  \]
  where the equality follows from~\eqref{eqtQ2}, and the first inequality is due to nonnegativity of $ x $.
\end{proof}

Notice that from Theorem~\ref{TheoremMain} we obtain Theorem~\ref{TheoremMainIntro} by choosing $T := V_n$.

\bibliographystyle{plain}
\bibliography{references}

\appendix

\section{Upper bound for small cardinalities}
\label{SectionUpperBound}

In this section we establish an upper bound of $ \orderO{ n^2 \cdot 2^{|T|} } $
on the extension complexities of $ T $-join- and $ T $-cut polyhedra.

\begin{lemma}
  \label{TheoremDominantTJoinExtension}
  For every $ n $ and every set $ T \subseteq V $, the extension complexity of $ \dominant{\PolyTJoin{n}} $ is bounded by $ \orderO{n^2 \cdot 2^{|T|}} $.
\end{lemma}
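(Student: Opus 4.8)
The plan is to build an extended formulation directly from the inequality description~\eqref{eqDescriptionTJoin} by encoding each family of $T$-cut constraints as a single flow-feasibility problem. The starting observation is that every $T$-cut has the form $\delta(S)$ with $|S \cap T|$ odd, and that two such sets $S,S'$ sharing the same trace $A := S \cap T = S' \cap T$ on $T$ impose ``the same kind'' of constraint. Grouping the constraints $x(\delta(S)) \ge 1$ by this trace, the full system~\eqref{eqDescriptionTJoin} is equivalent to requiring, for each of the $2^{|T|-1}$ subsets $A \subseteq T$ of odd cardinality, that $\min \setdef{x(\delta(S))}[S \subseteq V_n,\, S \cap T = A] \ge 1$. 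The sets $S$ with $S \cap T = A$ are exactly those containing $A$, disjoint from $T \setminus A$, and otherwise arbitrary on $V_n \setminus T$, so this minimum is precisely the minimum cut separating $A$ from $T \setminus A$ in $K_n$ under edge capacities $x$.

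First I would invoke the max-flow/min-cut theorem: for a fixed odd $A$ and a fixed $x \ge 0$, the minimum $A$-to-$(T\setminus A)$ cut has value at least $1$ if and only if there exists a feasible flow of value $1$ from $A$ to $T \setminus A$ respecting the capacities $x$. This lets me replace the (infinitely many) cut constraints of trace $A$ by the existence of a single auxiliary unit flow $f^A$. Concretely, I would split each edge $e = \{u,v\}$ into two arcs and introduce nonnegative variables $f^A_{(u,v)}, f^A_{(v,u)}$, impose flow conservation at every node of $V_n \setminus T$, fix the net flow out of $A$ to equal $1$, and couple to $x$ through the capacity constraints $f^A_{(u,v)} + f^A_{(v,u)} \le x_e$. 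Feasibility of this flow system, for given $x \ge 0$, then holds if and only if $x(\delta(S)) \ge 1$ for all $S$ with $S \cap T = A$.

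Next I would assemble the global extension $Q$ in the space of $x$ together with all flow vectors $f^A$ ranging over the $2^{|T|-1}$ odd subsets $A \subseteq T$, and let $\pi$ be the coordinate projection onto $x$. By the previous step, $\pi(Q) = \dominant{\PolyTJoin{n}}$. Counting constraints: the shared nonnegativity constraints on $x$ contribute $\orderO{n^2}$, and each of the $2^{|T|-1}$ flow blocks contributes $\orderO{n^2}$ inequalities (arc-nonnegativity and capacity) together with $\orderO{n}$ conservation equations, for a total of $\orderO{n^2 \cdot 2^{|T|}}$ facets. This yields the claimed bound.

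The conceptual crux, and the step I expect to carry the real weight, is the collapse of the exponentially many $T$-cut inequalities into only $2^{|T|-1}$ max-flow conditions indexed by the trace $A = S \cap T$; once this is in place, the flow encoding and the size count are routine. The only point requiring mild care is verifying that the bidirected capacity relaxation $f^A_{(u,v)} + f^A_{(v,u)} \le x_e$ realizes exactly the undirected minimum cut, so that flow feasibility is genuinely \emph{equivalent} to the cut condition rather than a relaxation or strengthening of it.
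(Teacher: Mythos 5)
Your proof is correct, but it takes a genuinely different route from the paper's. You work entirely on the constraint side: you group the $T$-cut inequalities of~\eqref{eqDescriptionTJoin} by their trace $A = S \cap T$ and replace each of the $2^{|T|-1}$ groups by a max-flow certificate, i.e.\ one unit-flow block from $A$ to $T \setminus A$ with shared capacities $x$, so your extension is an \emph{intersection} of flow blocks; since distinct blocks use disjoint auxiliary variables, the projection of the intersection is the intersection of the block projections, and max-flow/min-cut (together with the cancellation of oppositely directed flow, which you correctly flag as the point making the joint capacity constraint $f_{(u,v)} + f_{(v,u)} \le x_e$ exact rather than a relaxation) gives $\pi(Q) = \dominant{\PolyTJoin{n}}$. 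The paper instead works on the vertex side: every vertex of $\dominant{\PolyTJoin{n}}$ is a minimal $T$-join, hence an edge-disjoint union of $\tfrac{1}{2}|T|$ paths pairing up $T$; for each set $S \subseteq T$ with $|S| = \tfrac{1}{2}|T|$ of potential path-starting nodes it builds a flow polyhedron $P_S$ with unit supplies on $S$ and unit demands on $T \setminus S$, shows $P_S \subseteq \dominant{\PolyTJoin{n}}$ via total unimodularity and integrality, shows every vertex lies in some $P_S$, and then takes the convex hull of the \emph{union} via Balas' theorem (Proposition~\ref{TheoremBalas}). What each approach buys: yours needs neither integrality arguments nor Balas, only LP duality, and it produces an exact outer description in one step; the paper's stays within the toolkit it already uses elsewhere (Balas, TU flows) and uses $\binom{|T|}{|T|/2}$ blocks rather than $2^{|T|-1}$, though both counts yield the same bound $\orderO{n^2 \cdot 2^{|T|}}$. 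One minor point: the paper dismisses odd $|T|$ (where the polyhedron is empty) in a single sentence, whereas your argument implicitly assumes $|T|$ even because it starts from~\eqref{eqDescriptionTJoin}; you should say this explicitly.
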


\DeclareDocumentCommand\outArcs{m}{\delta^{\text{out}}\left(#1\right)}%
\DeclareDocumentCommand\inArcs{m}{\delta^{\text{in}}\left(#1\right)}%

For every node $ v $ of a directed graphs we denote by $\outArcs{v}$ and $\inArcs{v}$ the sets of arcs leaving (resp.\ entering) $ v $.

\begin{proof}
  Trivially, we only have to consider the case of $|T|$ even, since $\PolyTJoin{n} = \emptyset$ otherwise.
  Let $A := \setdef{ (u,v),(v,u) }[ \setdef{u,v} \in E_n ]$ denote the set of bidirected edges of $K_n$.
  We define, for $S \subseteq T$ with $|S| = |T| / 2$ the polyhedron
  \begin{multline}
    \label{UpperBoundExtension}
    P_S := \{ x \in \R^{E_n} : \exists f \in \R_+^A :
      f(\outArcs{v}) - f(\inArcs{v}) = 
      \begin{cases}
        1       & \text{ if $v \in S$} \\ 
        -1      & \text{ if $v \in T \setminus S$} \\ 
        0       & \text{ if $v \in V \setminus T$} 
      \end{cases} \\
      \text{for all $ v \in V $ and $ x_{\setdef{u,v}} \geq f_{(u,v)} + f_{(v,u)} $ for all $ \setdef{u,v} \in E_n $} \}.
  \end{multline}
  It is easy to see that the extension of $ P_S $ is an integer polyhedron since the first set of constraints defines a totally unimodular system with integral right-hand side, and since every $ x $-variable appears in only one of the further inequalities.
  Clearly, $P_S$ is an integer polyhedron as well, since the projection on the $x$-variables maintains integrality.

  We claim that $ P_S \subseteq \dominant{\PolyTJoin{n}} $ holds.
  To this end, let $x \in P_S$ and let $f \in \R_+^A$ be such that the constraints in~\eqref{UpperBoundExtension} are satisfied.
  For each node $v \in T$, we obtain $x(\delta(v)) \geq \sum_{\setdef{u,v} \in \delta(v)} (f_{(u,v)} + f_{(v,u)}) \geq 1$.
  By integrality of $P_S$, this suffices to prove the claim.
  Let now $J$ be a $T$-join.
  It is an edge-disjoint union of circuits $C_1, \dotsc, C_k$ and paths $P_1, \dotsc, P_\ell$ for $\ell = \frac{1}{2}|T|$ connecting disjoint pairs of nodes in $T$.
  For $i \in [k]$, let $\vec{C}_i \subseteq A$ be a directed version of $C_i$, that is, a directed cycle whose underlying undirected cycle is $C_i$.
  For $j \in [\ell]$, let $\vec{P}_j \subseteq A$ be a directed version of $P_j$, that is, a directed path whose underlying undirected path is $P_j$.
  Let $S \subseteq T$ be the set of starting nodes of the paths $\vec{P}_j$.
  Define $x := \chi(J)$ and for all $a \in A$, $f_a := 1$ if $a \in \vec{C}_i$ for some $i \in [k]$ or $a \in \vec{P}_j$ for some $j \in [\ell]$, and $f_a := 0$ otherwise.
  By construction, $(x,f)$ satisfies the constraints in~\eqref{UpperBoundExtension}, which shows $x \in P_S$.

  This proves that the vertex set of $\dominant{\PolyTJoin{n}}$ is covered by the union of the polyhedra $P_S$ for all $S \subseteq T$ with $|S| = \frac{1}{2} |T|$.
  Proposition~\eqref{TheoremBalas} yields desired result since there are less than $2^{|T|}$ such sets $S$ and $\xc(P_S) \leq 3|E_n|$ holds.
\end{proof}

\begin{corollary}
  \label{TheoremDominantTCutExtension}
  For every $ n $ and every set $ T \subseteq V $, the extension complexity of $ \dominant{\PolyTCut{n}} $ is bounded by $ \orderO{n^2 \cdot 2^{|T|}} $.
\end{corollary}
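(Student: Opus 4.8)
The plan is to avoid building a direct extended formulation and instead transfer the bound from Lemma~\ref{TheoremDominantTJoinExtension} across the blocking relation. First I would observe that $ \dominant{\PolyTCut{n}} $ is exactly the blocker of $ \dominant{\PolyTJoin{n}} $: by~\eqref{eqDescriptionTJoin}, the $ T $-join polyhedron is the set of nonnegative $ x $ satisfying $ \scalprod{\chi(C)}{x} \ge 1 $ over all $ T $-cuts $ C $, so by the standard theory of blocking polyhedra recalled in Section~\ref{SectionBlocking}, its blocker is the convex hull of the $ T $-cut incidence vectors together with $ \R_+^{E_n} $, which is precisely $ \dominant{\PolyTCut{n}} $. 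Equivalently, this is read off from $ \blocker{\blocker{P}} = P $ applied to $ P = \dominant{\PolyTJoin{n}} $.

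Next I would invoke the observation made in Section~\ref{SectionBlocking}, a consequence of Proposition~\ref{TheoremSeparation}, that the extension complexities of a blocking polyhedron in $ \R_+^d $ and of its blocker differ by at most $ d $. Here $ d = |E_n| = \binom{n}{2} = \orderO{n^2} $, so together with Lemma~\ref{TheoremDominantTJoinExtension} this yields
\[
  \xc(\dominant{\PolyTCut{n}}) \le \xc(\dominant{\PolyTJoin{n}}) + \binom{n}{2} \le \orderO{n^2 \cdot 2^{|T|}} + \orderO{n^2} = \orderO{n^2 \cdot 2^{|T|}},
\]
as claimed; this coincides with the upper bound already recorded in~\eqref{ExtensionComplexityTCutPoly}.

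I expect no genuine obstacle: the corollary is immediate once the two ingredients---the appendix lemma and the blocking transfer---are in place. The only step meriting an explicit line of justification is the blocker identity $ \dominant{\PolyTCut{n}} = \blocker{\dominant{\PolyTJoin{n}}} $, but this is forced by the dual inequality/vertex descriptions in~\eqref{eqDescriptionTJoin}. (One could alternatively mimic the proof of Lemma~\ref{TheoremDominantTJoinExtension} directly, covering the vertices of $ \dominant{\PolyTCut{n}} $ by polynomially-sized pieces and applying Proposition~\ref{TheoremBalas}, but the blocker argument is considerably shorter and reuses machinery already established.)
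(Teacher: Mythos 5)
Your proposal is correct and follows essentially the same route as the paper, whose entire proof is ``Apply Proposition~\ref{TheoremSeparation} to Lemma~\ref{TheoremDominantTJoinExtension}'': you identify $\dominant{\PolyTCut{n}}$ as the blocker of $\dominant{\PolyTJoin{n}}$ and transfer the bound via Martin's result, exactly as intended. The only cosmetic difference is that you invoke the ``differ by at most $d$'' consequence from Section~\ref{SectionBlocking}, while a direct application of Proposition~\ref{TheoremSeparation} even gives the sharper bound $\xc(\dominant{\PolyTCut{n}}) \le \xc(\dominant{\PolyTJoin{n}}) + 1$, since the blocking structure makes the nonnegativity constraints redundant; either way the $\orderO{n^2 \cdot 2^{|T|}}$ bound follows.
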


\begin{proof}
  Apply Proposition~\ref{TheoremSeparation} to Lemma~\ref{TheoremDominantTJoinExtension}.
\end{proof}

\end{document}